\DeclareMathOperator{\GL}{GL}
\newcommand{\F}{\mathbb{F}}
\newcommand{\Z}{\mathbb{Z}}
\newcommand{\wt}{\mathrm{wt}}
\newtheorem{theorem}{Theorem}
\newtheorem{problem}{Problem}
\newtheorem{proposition}[theorem]{Proposition}
\newtheorem{corollary}[theorem]{Corollary}
\theoremstyle{definition}
\newtheorem{definition}[theorem]{Definition}
\theoremstyle{remark}
\newtheorem{remark}[theorem]{Remark}
\newtheorem{example}[theorem]{Example}
\numberwithin{equation}{section}
\newcommand{\zps}{\mathbb{Z} /p^s \mathbb{Z}}
\newcommand{\zpsk}[2][s]{\left(\mathbb{Z}/p^{#1}\mathbb{Z}\right)^{#2}}
\newcommand{\gb}{\genfrac{[}{]}{0pt}{}}
\begin{document}

\title{Density of Free Modules over Finite Chain Rings}

\author{Eimear Byrne$^1$}
\address{$^1$ University College of Dublin, Ireland}
\email{ebyrne@ucd.ie}

\author{Anna-Lena Horlemann$^2$}
\address{$^3$University of St. Gallen, Switzerland}
\email{anna-lena.horlemann@unisg.ch}

\author{Karan Khathuria$^3$}
\address{$^4$University of Tartu, Estonia}
\email{karan.khathuria@ut.ee}

\author{Violetta Weger$^1$ $^\star$}
\thanks{$^\star$Corresponding author}
\email{violetta.weger@ucd.ie}





\keywords{Finite Chain Ring, Module, Density, Coding Theory}

\subjclass[2010]{11T71, 11B65, 05A30}

\begin{abstract}
In this paper we focus on modules over a finite chain ring $\mathcal{R}$ of size $q^s$. We compute the density of free modules of $\mathcal{R}^n$, where we separately treat the asymptotics in $n,q$ and $s$. In particular, we focus on two cases: one where we fix the length of the module and one where we fix the rank of the module. In both cases, the density results can be bounded by the Andrews-Gordon identities. We also study the asymptotic behaviour of modules generated by random matrices over $\mathcal{R}$. Since linear codes over $\mathcal{R}$ are submodules of $\mathcal{R}^n$ we get direct implications for coding theory. For example, we show that random codes achieve the Gilbert-Varshamov bound with high probability. 
\end{abstract}

\maketitle

\section{Introduction}
\label{sec:introduction}
The study of the asymptotic behaviour of integer sequences is a classical topic of number theory (see, e.g. \cite{niven}). Such methods rely on obtaining density functions on sets of positive integers in a given interval. More generally, one can study the asymptotic behaviour of combinatorial objects such as linear codes and matrices over a finite alphabet.
There are numerous works in this direction in the case of linear codes and matrices over finite fields \cite{heide,Balakin,partition,fulmangoldstein,gruicarav}, while the study of random code ensembles is as old as the topic of coding theory itself
\cite{bargforney,gallager,loidreau,pierce,shannon}.

In this paper, we consider such questions in the context of linear codes over finite chain rings. A linear code over a finite ring $\mathcal{R}$ is simply an $\mathcal{R}$-module, typically an $\mathcal{R}$-submodule of the free module $\mathcal{R}^n$ where $\mathcal{R}^n$ is endowed with a distance function such as the Hamming, Lee or homogeneous metric.
If $\mathcal{R}$ is not commutative, then the corresponding linear code is usually selected to be a left $\mathcal{R}$-module. 
Codes over rings have been studied widely \cite{assmus, blake1, blake2, hom, Greferath, hammons, klemm, nechaev, shiromoto, spiegel}, but their asymptotic behaviour has not been addressed as thoroughly as their finite field counterparts.  
Formulas on the number of modules over finite chain rings and the matrices that generate them can be found in \cite{butler,count, kschischang,numbermod,  gengauss}. Obtaining estimates on the density of finite modules  of a given shape over a finite chain ring $\mathcal{R}$ can be found by asymptotic enumeration of such formulas in respect of certain parameters, such as the size of the residue field of $\mathcal{R}$, the depth of the ideal chain (also called its nilpotency index)  of $\mathcal{R}$, or the rank of the ambient space. 
Our main results are focused on computing the densities of free modules in $\mathcal{R}^n$; 
in our approach we treat the cases of asymptotic growth of the alphabet $\mathcal{R}$ and the asymptotic growth of $n$ separately. We also separately consider the behaviour of modules of a fixed 
{length} $\ell$, and modules of a fixed {rank} $K$. For the free submodules of $\mathcal{R}^n$ of length $\ell$, we show that as $n$ goes to infinity the density is bounded from above and below by the Andrews-Gordon identities \cite{andrewid,gordon}. As the order $q$ of the residue field of $\mathcal{R}$ goes to infinity, the density of the free codes of length $\ell$ in $\mathcal{R}^n$ goes to $1$. In contrast to this, as the depth $s$ of the ideal chain in $\mathcal{R}$ goes to infinity, the density of the free codes of length $\ell$ in $\mathcal{R}^n$ approaches Euler's function $(1/q)_\infty$. In respect of free modules of rank $K$, we show that the density of such modules as $n$ grows infinitely large is $1$ if $K<n/2$ and is zero if $K>n/2$. 

Related to such problems is the question on the probability that a random matrix generates a code attaining the Gilbert-Varshamov bound. The probabilistic method has a natural application to this topic in coding theory, where density questions can be framed in terms of the behaviour of random matrices.
We show that a random matrix generates a code that achieves the Gilbert-Varshamov bound with probability at least $1-e^{\Omega(n)}$. In this result, we consider a general metric on the ambient space, which is extended to tuples additively. The Hamming, Lee and homogeneous metric are some examples of such a metric.
The interest in this result is two-fold: on one hand it shows the existence of asymptotically good codes, 
and on the other hand it is of importance in the design and analysis of certain code-based cryptosystems based on linear codes over finite chain rings. For the latter, note that in particular Lee-metric codes have recently gained interest in the code-based cryptography community (see for example \cite{bariffi2021analysis,Horlemann2019,leenp}). However, many open questions about these codes have so far obstructed the design and proper cryptanalysis of cryptosystems based on them. One of these questions is the behaviour of codes generated by a random matrix, which we hereby answer.

This paper is organized as follows: in Section \ref{sec:preliminaries} we introduce the definitions and tools that are used throughout this paper. In Section \ref{sec:dens} we give the density results of free modules of given length, respectively of given rank. In Section \ref{sec:codes} we point out the implications of these results for coding theory and show that random codes achieve the Gilbert-Varshamov bound with high probability. Finally, in Section \ref{sec:concl} we state some interesting open problems.

\section{Preliminaries}\label{sec:preliminaries}

In this section we recall some basics of modules over finite chain rings. We also introduce the main tools that are used in this paper, such as the Gaussian  coefficient  and its connection to the $q$-Pochhammer symbol (for more information see for example \cite{bailey, chara, gasper, mac}).

\subsection{Modules over Finite Chain Rings}

We consider finite chain rings and modules over such rings. See, for example, \cite{numbermod,mac} for an introduction to these objects. 
A ring $\mathcal{R}$ is called a left (resp. right) \emph{chain ring} if the left (resp. right) ideals of $\mathcal{R}$
form a chain i.e., for any two left (resp. right) ideals $I, J \subseteq \mathcal{R}$ one has either $I \subseteq J$ or $J \subseteq I$.
Note that in such rings every left ideal is also a right ideal. Every finite chain ring is a local ring, where the maximal ideal is principal. 
Throughout this paper, we let $\mathcal{R}$ denote a finite chain ring and denote by $\langle \pi \rangle$ its unique maximal ideal. We denote by $s$ the nilpotency index of $\mathcal{R}$, i.e., $s$ is the smallest positive integer such that $\pi^s =0$. 
If $q$ is the size of the residue field of $\mathcal{R}$, i.e., if $q= |\mathcal{R}/ \langle \pi \rangle |$ then $|\mathcal{R}| = q^s$.
Well-known examples of finite chain rings include the integer modular ring $\Z/{p^s} \Z$ for $p$ prime and the Galois ring $GR(p^s,r)$ of characteristic $p^s$ and order $p^{rs}$.

Unless explicitly stated otherwise, $M:={_\mathcal{R}}{M}$ will henceforth denote a finite left $\mathcal{R}$-module.

Any module $M$ of our finite chain ring $\mathcal{R}$ can be written as a direct sum of cyclic $\mathcal{R}$-modules, i.e.,
$$M \cong \bigoplus_{i=1}^K \mathcal{R}/ \langle \pi \rangle^{\lambda_i},$$
where $\lambda_1\geq \cdots \geq \lambda_K >0.$ This sequence forms thus a partition $\lambda= (\lambda_1, \ldots, \lambda_K),$ which is called the \emph{type} of a module \cite{mac}. 
The conjugate partition of $\lambda$ is called the \emph{shape} $\mu = (\mu_1,\mu_2,\ldots)$ of the module and is such that 
$$\mu_i = \dim_{\mathcal{R}/ \langle \pi \rangle}(\langle \pi \rangle^{i-1} M / \langle \pi \rangle^i M).$$

The \emph{length} $\ell$ of the module $M$ is equal to the weight of the type $\lambda$, i.e.,  $\ell= \mid \lambda \mid = \sum_{i=1}^K \lambda_i$ and is such that $$\log_q(\mid M \mid) = \ell,$$ whereas the length  $K$ of this partition $\lambda$ is called the \emph{rank} of the module. 

Note that the type $\lambda$ is uniquely determined by the module $M$ and any two modules of type $\lambda$ are isomorphic. Clearly $M$ is a free $\mathcal{R}$-module if it has type $(s, \ldots, s)$ and thus shape $(K, \ldots, K).$


For convenience, we will denote the type $$\lambda =(\underbrace{s, \ldots, s}_{k_1}, \underbrace{s-1, \ldots, s-1}_{k_2}, \ldots, \underbrace{1, \ldots, 1}_{k_s})  $$ of a module, using the frequencies of its parts, i.e.,  $\lambda =(s^{k_1} (s-1)^{k_2} \cdots 1^{k_s}).$

We say that a matrix $A \in \mathcal{R}^{m \times n}$ has type $\lambda$,   if the left $\mathcal{R}$-module generated by its rows has  type $\lambda$.

\subsection{Gaussian Coefficient}

We now introduce some standard formulas that we will use in our counting arguments.

\begin{definition}
The \emph{Gaussian coefficient} or the $q$-binomial coefficient is defined as  
$$\gb{n}{k}_q = \prod\limits_{i=0}^{k-1} \frac{q^n-q^i}{q^k-q^i} = \frac{\prod_{i=1}^n (1-q^i)}{\prod_{i=1}^k (1-q^i)\prod_{i=1}^{n-k} (1-q^i)}.$$
\end{definition}
If $q$ is a prime power, then the Gaussian coefficient gives the number of subspaces of $\mathbb{F}_q^n$ of dimension $k$. 

\begin{definition}
Let $r$ be a positive integer. The \emph{$q$-Pochhammer symbol} is defined to be 
\begin{align*} (a;q)_r &:= \prod_{i=0}^{r-1} \left(1-aq^i\right). 
\end{align*}
We also define $(a;q)_0:=1$ and in the instance that $0 < q <1$ for a fixed $a \in \mathbb{R}$, we define 
\begin{align*}
\lim\limits_{r \to \infty} (a;q)_r =(a;q)_\infty & := \prod_{i=0}^\infty \left(1-aq^i\right).
\end{align*}
\end{definition}
Clearly, $(q;q)_r = \prod_{i=1}^r \left(1-q^i\right)$, and since we will use this quantity often, we will abbreviate $(q;q)_r$ to $(q)_r$.

The Gaussian coefficient has the following expression in terms of $q$-Pochhammer symbols: 
$$\gb{n}{k}_q = \frac{(q)_n}{(q)_{k}(q)_{n-k}}.$$
Regarding the asymptotics of the Gaussian coefficient, we have that for a constant $k\in \{1,\ldots, n\}$ and $| 1/q | <1$
$$\lim\limits_{n  \to \infty} \gb{n}{k}_{1/q} = \frac{1}{(1/q)_k}.$$
For fixed $0 <R<1$  we have that \begin{equation}
    \lim\limits_{n \to \infty} \gb{n}{Rn}_{1/q} = \frac{1}{(1/q)_\infty} , \label{eq:asymp_Gauss}
\end{equation}   
whereas for $R \in \{0,1\}$, this limit is 1, as  $\gb{n}{Rn}_{1/q} =1$.
An important identity of the Gaussian coefficient is the following:
$$ \gb{n}{k}_q = q^{(n-k)k} \gb{n}{k}_{1/q}.$$
It was shown in \cite{gadouleau}, that
$$q^{(n-k)k} \leq \gb{n}{k}_q \leq \frac{1}{(1/q)_\infty} q^{(n-k)k}. $$

The factor $(1/q)_\infty$, also called Euler's function, can be interpreted as the ratio of invertible $m \times m$ matrices to all matrices, i.e., $\frac{\mid \GL_m(\mathbb{F}_q)\mid}{\mid \mathbb{F}_q^{m \times m} \mid}$ for $m$ going to infinity. We have that $(1/q)_\infty$ is increasing in $q$ and approaches $1$ as $q \to \infty$. In particular, $ \gb{n}{k}_q \sim q^{(n-k)k}$ as $q \to \infty$.

\subsection{Counting Modules}\label{sec:countnew}
 
In this subsection we will provide the main formulas for counting the number of modules of a given length or rank over a finite chain ring $\mathcal R$. 

The following proposition is
a special case of \cite[Theorem 2.4]{numbermod}. A proof for the case $\mathcal{R}= \zps$ can be read in \cite{butler}.

\begin{proposition}
The number of submodules of $\mathcal{R}^n$ with shape $\mu=(\mu_1,\dots,\mu_s)$ is given by 
$$N_{n,q}(\mu) := \prod\limits_{i=1}^{s} q^{(n - \mu_{i})\mu_{i+1}} \gb{n-\mu_{i+1}}{\mu_{i}-\mu_{i+1}}_q = q^{\sum\limits_{i=1}^s (n-\mu_i) \mu_{i+1}} \prod\limits_{i=1}^s\gb{n-\mu_{i+1}}{\mu_i-\mu_{i+1}}_q,$$
\end{proposition}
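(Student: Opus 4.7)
Plan: Proceed by induction on the nilpotency index $s$. For $s=1$, the ring $\mathcal{R}$ equals its residue field $\mathbb{F}_q$, the shape reduces to $\mu=(\mu_1)$, and (with the convention $\mu_2=0$) the right-hand side collapses to $\gb{n}{\mu_1}_q$, which is the classical $q$-binomial count of $\mu_1$-dimensional subspaces of $\mathbb{F}_q^n$. This settles the base case.

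For the inductive step, pass to the quotient chain ring $\bar{\mathcal{R}}:=\mathcal{R}/\langle\pi^{s-1}\rangle$, which has nilpotency index $s-1$ and the same residue field of size $q$. A short direct calculation using type/shape conjugation shows that the projection $\mathcal{R}^n\to\bar{\mathcal{R}}^n$ sends any submodule $M\subseteq\mathcal{R}^n$ of shape $(\mu_1,\dots,\mu_s)$ to a submodule $\bar M\subseteq\bar{\mathcal{R}}^n$ of shape $(\mu_2,\dots,\mu_s)$ (the effect of the projection is to drop the first column of the Young diagram of the conjugate type). By the inductive hypothesis applied to $\bar{\mathcal{R}}$, the number of possible images $\bar M$ equals
$$\prod_{i=2}^{s} q^{(n-\mu_i)\mu_{i+1}}\gb{n-\mu_{i+1}}{\mu_i-\mu_{i+1}}_q,$$
so it suffices to show that exactly $q^{(n-\mu_1)\mu_2}\gb{n-\mu_2}{\mu_1-\mu_2}_q$ shape-$\mu$ submodules $M\subseteq\mathcal{R}^n$ project onto any fixed $\bar M$.

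To execute the lifting count, I would parametrise each such $M$ by $\bar M$ together with its top-layer subspace $M/\pi M\subseteq\mathcal{R}^n/\pi\mathcal{R}^n\cong\mathbb{F}_q^n$, of dimension $\mu_1$, required to extend the $\mu_2$-dimensional subspace coming from $\bar M/\pi\bar M$ in the right way. The Gaussian factor $\gb{n-\mu_2}{\mu_1-\mu_2}_q$ counts the number of ways to adjoin $\mu_1-\mu_2$ fresh directions in the $(n-\mu_2)$-dimensional quotient of $\mathbb{F}_q^n$ by that forced subspace, while the prefactor $q^{(n-\mu_1)\mu_2}$ records the affine ambiguity in choosing set-theoretic lifts of the already-prescribed $\mu_2$ directions from $\bar{\mathcal{R}}^n$ back to $\mathcal{R}^n$, modulo the submodule already fixed by $\bar M$.

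The hard part is making the lifting bijection watertight: one has to verify that each shape-$\mu$ submodule projecting to $\bar M$ is produced by a unique (extension subspace, affine shift)-pair, and that the resulting $M$ indeed has the prescribed $\pi$-filtration — in particular that $\pi M$ recovers the submodule demanded by $\bar M$ and that $M[\pi]$ has the correct $\mathbb{F}_q$-dimension $\mu_1$. This requires analysing $\mathcal{R}^n/\bar M$ as a $\pi$-filtered module and using the compatibility of the shape invariant with the $\pi$-multiplication functor. An alternative route, closer to the cited reference, bypasses the induction and instead enumerates Howell-form generator matrices of shape-$\mu$ submodules directly, but the same combinatorial bookkeeping reappears to match the product structure above.
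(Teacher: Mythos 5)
The paper itself does not prove this proposition -- it is quoted as a special case of \cite[Theorem 2.4]{numbermod}, with \cite{butler} cited for $\mathcal{R}=\zps$ -- so your attempt has to stand on its own. Its skeleton is sound: the base case $s=1$ is the classical subspace count, reduction modulo $\pi^{s-1}$ does send a shape-$(\mu_1,\dots,\mu_s)$ submodule of $\mathcal{R}^n$ to a shape-$(\mu_2,\dots,\mu_s)$ submodule of $\bar{\mathcal{R}}^n$ (this follows from the diagonal/Smith form of a submodule, which you should state and use), and the product does telescope so that the whole induction reduces to showing that every fixed $\bar M$ of shape $(\mu_2,\dots,\mu_s)$ has exactly $q^{(n-\mu_1)\mu_2}\gb{n-\mu_2}{\mu_1-\mu_2}_q$ preimages of shape $\mu$. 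But that fiber count is the entire content of the proposition in miniature, and you assert it rather than prove it -- you say yourself that "the hard part is making the lifting bijection watertight" and leave that part undone. As submitted, this is a plan, not a proof.

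Worse, the parametrization you sketch for the fiber is incorrect as stated. $M/\pi M$ is not a subspace of $\mathcal{R}^n/\pi\mathcal{R}^n$: the natural map $M/\pi M\to\mathcal{R}^n/\pi\mathcal{R}^n$ has kernel $(M\cap\pi\mathcal{R}^n)/\pi M$, and the image of $M$ in $\mathcal{R}^n/\pi\mathcal{R}^n$ has $\mathbb{F}_q$-dimension $\mu_s$, not $\mu_1$ (already $M=\pi\mathcal{R}\subseteq\mathcal{R}$ has $\mu_1=1$ but zero image). The $\mu_1$-dimensional space canonically attached to $M$ lives instead in the socle $\pi^{s-1}\mathcal{R}^n\cong\mathbb{F}_q^n$, namely $M\cap\pi^{s-1}\mathcal{R}^n$, and the subspace of it forced by $\bar M$ is $\pi^{s-1}p^{-1}(\bar M)=\pi^{s-1}M$, of dimension $\mu_s$, not $\mu_2$; so the proposed bookkeeping ("extend a forced $\mu_2$-dimensional space by $\mu_1-\mu_2$ fresh directions, times $q^{(n-\mu_1)\mu_2}$ affine shifts") does not match the actual constraints, even though the resulting number is correct (small cases such as $s=2$, $n\le 3$ confirm it). A workable repair is: first show all fibers over shape-$(\mu_2,\dots,\mu_s)$ submodules of $\bar{\mathcal{R}}^n$ are equinumerous, using that $\GL_n(\mathcal{R})\to\GL_n(\bar{\mathcal{R}})$ is surjective and that $\GL_n(\bar{\mathcal{R}})$ acts transitively on submodules of a fixed shape, and then count the preimages of the standard "diagonal" $\bar M$ by an explicit generator-matrix computation; alternatively, drop the induction and parametrize $M$ by the flag of subspaces $\bigl((M\cap\pi^i\mathcal{R}^n)+\pi^{i+1}\mathcal{R}^n\bigr)/\pi^{i+1}\mathcal{R}^n$ of dimensions $\mu_s\le\cdots\le\mu_1$ together with the free choices in lifting, which is essentially the route of the cited references. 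Until one of these is carried out, the induction step is a genuine gap.
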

where we define $\mu_{s+1}=0.$
If we consider the type, then the number of submodules of $\mathcal{R}^n$ with type $(s^{k_1} \cdots 1^{k_s})$ is given by 
 $$N_{n,q}(k_1, \ldots, k_s)  :=  q^{\sum_{i=1}^s (n - \sum_{j=1}^i k_j)\sum_{j=1}^{i-1}k_j} \prod\limits_{i=1}^s \gb{n-\sum_{j=1}^{i-1}k_j}{k_i}_q.$$

 The number of free submodules of rank $K$ 
 is then given by 
   $$ N_{n,q}(K, 0, \ldots, 0)= q^{(n-K)K(s-1)} \gb{n}{K}_q.$$

We define $L(s,n,\ell)$ to be the set of all possible types for length $\ell$, i.e.,
$$L(s,n,\ell) := \left\{(k_1, \ldots, k_s) \mid \sum\limits_{i=1}^s k_i (s-i+1) =\ell, \sum\limits_{i=1}^s k_i \leq n \right\}.$$ 
We thus count all submodules  of $\mathcal{R}^n$ of length $\ell$ as
$$M(n,\ell,q,s) := \sum\limits_{(k_1, \ldots, k_s) \in L(s,n,\ell)} N_{n,q}(k_1, \ldots, k_s).$$

Using the connection of the type to the shape $\mu$ of a module $M$, we see that $M(n,\ell,q,s)$ is a $q$-multinomial as defined in \cite[Definition 1]{qmultinomials}: 
\begin{definition}\label{qmulti} Let $\ell \leq n$ be non-negative integers, let $s$ be a positive integer and let $q>0$. The \emph{$q$-multinomial} is defined as
\begin{equation}
  \gb{n}{\ell}_q^{(s)} := \sum\limits_{\mu_1+ \cdots + \mu_s = \ell} q^{\sum_{j=1}^{s-1} (n-\mu_j)\mu_{j+1}} \gb{n}{\mu_1}_q \gb{\mu_1}{\mu_2}_q \cdots \gb{\mu_{s-1}}{\mu_s}_q. \label{eq:qmultinomial}
\end{equation} 
\end{definition}
Note that this definition is different to the usual definition of the $q$-multinomial coefficient (see for example \cite{chara}), which is defined as $$\gb{n}{i_1, \ldots, i_r, n-K}_q = \gb{n}{K_r}_q \gb{K_r}{K_{r-1}}_q \cdots \gb{K_2}{K_1}_q,$$  where $\sum_{j=1}^m i_j =K_m.$

In addition, we need to count the number of modules of a given rank, for which we need the set of weak compositions of $K$ into $s$ parts, denoted by $C(s,K)$, i.e., $$C(s,K):= \left\{ (k_1, \ldots, k_s) \mid 0 \leq k_i \leq K, \sum\limits_{i=1}^s k_i=K\right\}.$$

The number of submodules  of $\mathcal{R}^n$ of rank $K$ is given by
\begin{align*} W(n,K,q,s) := \sum\limits_{(k_1, \ldots, k_s) \in C(s,K)} N_{n,q}(k_1, \ldots, k_s).
\end{align*}

\section{The Density of Free Modules}\label{sec:dens}
In this section we will determine the densities 
of free modules of a finite chain ring $\mathcal R$. In the first subsection we  fix the length of the modules and in the second we  fix the rank of the modules. 
By density, we refer to the limit of a probability as a chosen parameter goes to infinity. More precisely, for any two sequences $(A(n))_{n \geq 1}$ and $(S(n))_{n \geq  1}$ of sets satisfying $A(n) \subseteq S(n)$, we denote by $$\lim_{n \to \infty} \frac{| A(n)|}{|S(n)|},$$ the {\em density} of $A$ in $S$, if the limit exists. Furthermore, if the density of $A$ in $S$ is 0, we say that $A$ is {\em sparse} in $S$ and if the density of $A$ in $S$ is 1, we say that $A$ is {\em dense} in $S$.

\subsection{Density of Free Modules of Fixed Length}\label{sec:densk}

For the remainder, we let $\mathcal{R}$ have a residue field of order $q$. Recall that we denote by $s$ the nilpotency index of $\mathcal{R}$.

The probability that a randomly chosen submodule of $\mathcal{R}^n$ of length $\ell$ is free is given by

$$\psi(n,\ell,q,s) := \frac{ N_{n,q}(\ell/s, 0, \ldots, 0) }{M(n,\ell,q,s) }= \frac{q^{(sn-\ell)\ell(s-1)/s^2} \gb{n}{\ell/s}_q }{M(n,\ell,q,s) }.$$
Note that, for a free module, $\ell$ is a multiple of $s$. 
In order to rewrite this probability, we introduce the following notation, adapted from Equation (2.17)   of \cite{schilling1996multinomials}: 
\begin{equation}
T_s(n,m;q) := q^{sn^2/4-m^2/s} \gb{n}{sn/2-m}_{1/q}^{(s)}. \label{eq:T_multinomial}
\end{equation}

Using the fact that $\gb{n}{k}_q= \gb{n}{k}_{1/q} q^{(n-k)k}$ and \eqref{eq:T_multinomial}, we can rewrite the probability $\psi(n,\ell,q,s)$ as

\begin{align*} 
\psi(n,\ell,q,s) &= \frac{ \gb{n}{\ell/s}_{1/q} }{q^{-(ns-\ell)\ell/s}M(n,\ell,q,s) }  = \frac{\gb{n}{\ell/s}_{1/q}}{T_s(n,sn/2-\ell;1/q)}.
\end{align*}
The quantity $T_s(n, sn/2-\ell; 1/q)$ was shown in \cite[Corollary 2.1]{supernom} to have the following limit 
 \begin{align}\label{limiteq} &\lim\limits_{n \to \infty} T_s(n,sn/2-\ell;1/q)= \nonumber \\
 & \frac{1}{(1/q)_\infty}  \sum\limits_{\substack{(k_2, \ldots, k_s) \in \Z_{\geq 0}^{s-1} \\ 1/s(k_2(s-1) + k_3(s-2) + \cdots + k_s) \in \mathbb{Z}}} \frac{(1/q)^{(k_2,  \dots , k_s) C^{-1} (k_2, \ldots, k_s)^\intercal}}{(1/q)_{k_2} \cdots (1/q)_{k_s}},   \end{align}
 
where $C$ is the $(s-1) \times (s-1)$ Cartan matrix defined via $C_{i,j}^{-1} = \min\{i,j\} - \frac{ij}{s}$.

In the following we study the asymptotics of $\psi(n,\ell,q,s)$, thus finding the density of free modules of a given length. We start with the asymptotics in $n$, where we let $\ell= Rns$ for $0 < R <1$ and $s \mid \ell.$

\begin{theorem}\label{thm:dens}
Let $\ell $ and $n$ be positive integers with $\ell=Rns$, where $0<R<1$ and $ s \mid \ell.$
The density as $n \to \infty$ of the free submodules of $\mathcal{R}^n$ of length $\ell$ is given by
\begin{equation}\label{denskps}
    \left(\sum\limits_{\substack{k_2, \ldots, k_s \geq 0\\ s \mid K_2 + \cdots + K_s }} \frac{(1/q)^{K_2^2 + \cdots+ K_s^2  -(K_2+ \cdots +K_s)^2/s  }}{(1/q)_{k_2} \cdots (1/q)_{k_s}} \right)^{-1},
\end{equation}
where $K_i=\sum_{j=2}^i k_j.$ 
\end{theorem}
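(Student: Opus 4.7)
The plan is to start from the closed-form expression $\psi(n,\ell,q,s) = \gb{n}{\ell/s}_{1/q} / T_s(n, sn/2 - \ell; 1/q)$ derived just above the theorem, and to take the limit as $n \to \infty$ of the numerator and denominator separately. Since $\ell/s = Rn$ with $0 < R < 1$, the numerator converges to $1/(1/q)_\infty$ by the asymptotic formula \eqref{eq:asymp_Gauss}. The denominator is handled directly by \eqref{limiteq}, which furnishes a limit of the form $1/(1/q)_\infty$ multiplied by a certain sum over $(k_2, \ldots, k_s) \in \mathbb{Z}_{\geq 0}^{s-1}$. The two factors $1/(1/q)_\infty$ cancel in the ratio, leaving the reciprocal of the sum from \eqref{limiteq}.

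After this cancellation, the remaining task is to re-express the quadratic form $(k_2,\ldots,k_s)\, C^{-1}\, (k_2,\ldots,k_s)^\intercal$ appearing in \eqref{limiteq}, where $C^{-1}_{ij} = \min\{i,j\} - ij/s$, in the shape displayed in \eqref{denskps}. I would re-index by setting $u_i := k_{s-i+1}$ for $i \in \{1,\ldots,s-1\}$, so that $\sum_{i=1}^{s-1} i\, u_i = \sum_{j=2}^{s} (s-j+1)\, k_j = K_2 + \cdots + K_s$, the last equality following from $K_i = \sum_{j=2}^i k_j$ and a Fubini swap of summation order. To handle the $\min\{i,j\}$ part I would invoke the identity
\[
\sum_{i,j=1}^{s-1} u_i u_j \min\{i,j\} \;=\; \sum_{t=1}^{s-1}\Bigl(\sum_{i \geq t} u_i\Bigr)^{\!2},
\]
and then observe that each inner partial sum $\sum_{i\geq t} u_i$ translates back to $K_{s-t+1}$ under the re-indexing. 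Summing over $t=1,\ldots,s-1$ gives $K_2^2 + \cdots + K_s^2$, and combining with the $-\tfrac{1}{s}\bigl(\sum_i i\, u_i\bigr)^2$ term produces exactly the exponent $K_2^2 + \cdots + K_s^2 - (K_2+\cdots+K_s)^2/s$ appearing in \eqref{denskps}.

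Finally, I would verify that the divisibility condition $\tfrac{1}{s}\bigl(k_2(s-1) + k_3(s-2) + \cdots + k_s\bigr) \in \mathbb{Z}$ stated in \eqref{limiteq} coincides with the condition $s \mid K_2 + \cdots + K_s$ in the theorem, which is immediate from the same identity $K_2 + \cdots + K_s = \sum_{j=2}^s (s-j+1) k_j$ used above. I expect the main obstacle to be the Cartan-matrix computation: translating the abstract quadratic form against an indexing convention that is only implicit in \eqref{limiteq} requires some care, and the correct choice $u_i = k_{s-i+1}$ is precisely what aligns the linear form $\sum i\, u_i$ with the combinatorially natural prefix-sum quantity $K_2 + \cdots + K_s$. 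Once this identification is in place, all other steps are essentially direct substitutions into the formulas assembled in Section \ref{sec:countnew} and \ref{sec:densk}.
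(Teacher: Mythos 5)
Your proposal is correct and follows essentially the same route as the paper: take the limit of $\psi(n,\ell,q,s)=\gb{n}{\ell/s}_{1/q}/T_s(n,sn/2-\ell;1/q)$ using \eqref{eq:asymp_Gauss} and \eqref{limiteq}, cancel the two factors of $1/(1/q)_\infty$, and identify the Cartan quadratic form with $K_2^2+\cdots+K_s^2-(K_2+\cdots+K_s)^2/s$ together with the divisibility condition $s\mid K_2+\cdots+K_s$. The only difference is cosmetic: you verify the quadratic-form identity via the reversal re-indexing $u_i=k_{s-i+1}$ and the decomposition of $\min\{i,j\}$ into prefix indicators (which is legitimate, since $C^{-1}_{i,j}=\min\{i,j\}-ij/s$ is invariant under $i\mapsto s-i$, $j\mapsto s-j$), whereas the paper establishes the same identity by induction.
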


\begin{proof}
From \eqref{eq:asymp_Gauss} and
\eqref{limiteq} it follows that 
\begin{align*}\lim\limits_{n \to \infty}\psi(n,\ell,q,s)
&= \lim\limits_{n \to \infty} \frac{\gb{n}{\ell/s}_{1/q}}{T_s(n,sn/2-\ell;1/q)}\\
& =\left( \sum\limits_{\substack{(k_2, \ldots, k_s) \in \Z_{\geq 0}^{s-1} \\ 
\frac{1}{s}(k_2(s-1) + k_3(s-2) + \cdots + k_s) \in \mathbb{Z}}} \frac{(1/q)^{(k_2,  \dots , k_s) C^{-1} (k_2, \ldots, k_s)^\intercal}}{(1/q)_{k_2} \cdots (1/q)_{k_s}}\right)^{-1}, \end{align*}
where $C_{i,j}^{-1} = \min\{i,j\} - \frac{ij}{s}$. 
The condition on the sum is equivalent to 
$$s \mid k_2(s-1)+k_3(s-2)+ \cdots + k_s.$$ 
It can be proved by induction that: 

\begin{align*}
    (k_2,  \dots ,k_s) C^{-1} (k_2, \ldots, k_s)^\intercal  &= \frac{1}{s} \left( \sum_{i=2}^{s} k_i^2 + \sum_{i=2}^{s-1} (k_i +k_{i+1})^2 + \cdots + (k_2 + \cdots +  k_s)^2 \right) \\
    &=\sum_{i=2}^s (i-1) \frac{s-i+1}{s} k_i^2 + \sum\limits_{2 \leq i <j \leq  s} 2k_ik_j (i-1)\frac{s-j+1}{s} \\
    &= \sum_{i=2}^s \left( \sum_{j=2}^i k_j \right)^2 -\left( \sum_{i=2}^s \frac{s-i+1}{s} k_i \right)^2 s \\ 
    & = K_2^2 + \cdots+ K_s^2  - \frac{(K_2+ \cdots +K_s)^2}{s}, 
\end{align*}
where we set $K_i= \sum_{j=2}^i k_j$.
Hence the density of the free modules of length $\ell$ is given by 
\begin{align*}
    &  \left(\sum\limits_{\substack{k_2, \ldots, k_s \geq 0\\ s \mid K_2 + \cdots + K_s }} \frac{(1/q)^{K_2^2 + \cdots+ K_s^2  -(K_2+ \cdots +K_s)^2/s  }}{(1/q)_{k_2} \cdots (1/q)_{k_s}} \right)^{-1}.
\end{align*}
\end{proof}

Note that this is very similar to the Andrews-Gordon identity \cite[Theorem 1]{andrewid}, which states that for $ \mid q \mid <1$

\begin{align*}
    AGI(q,s ):=& \sum\limits_{n_1, \ldots, n_{s-1} \geq 0} \frac{q^{N_1^2+ \cdots N_{s-1}^2}}{(q)_{n_1} \cdots (q)_{n_{s-1}}}
    =  \prod\limits_{n \not\equiv 0, \pm i \mod 2s+1}^\infty \left(1-q^n\right)^{-1} \\
      =  & \frac{(q^s;q^{2s+1})_\infty (q^{s+1}; q^{2s+1})_\infty (q^{2s+1};q^{2s+1})_\infty}{(q)_\infty},
\end{align*}
for any $i \in \{1, \ldots, s\}$ and $N_i = n_i + \cdots +n_{s-1}.$
It may be the case that the series in \eqref{denskps} has a similar expression to the above in terms of $q$-Pochhammer symbols.
Achieving this would require a generalization of the Andrews-Gordon identity, which is itself a generalization of the Roger-Ramanujan identities. 
Another generalization of the Roger-Ramanujan identities that can be considered is the Alder polynomial \cite{alder}. This generalization is essentially the Andrews-Gordon identity along with the condition that $s\mid (N_1+ \cdots + N_{s-1})$, see \cite[Theorem 2]{andrewid}. Since a suitable formulation of the Alder polynomial in terms of $q$-Pochhammer symbols is not known, we only consider the Andrews-Gordon identity for our results.

In the special case where $s=2$, we are indeed able to express this density  in terms of $q$-Pochhammer symbols: in fact, in Equation (2.53) of \cite{baxter} it is shown that for even $n-m$:
$$\lim\limits_{n \to \infty} T_2(n,n-m;1/q) = \frac{1/2 ((-\sqrt{1/q};1/q)_\infty+(\sqrt{1/q};1/q)_\infty)}{(1/q)_\infty}.$$
Since we set $m=n-\ell$, this is indeed the case and thus, the density for $s=2$ is given by 
\begin{align*}
\frac{ 2 }{(-\sqrt{1/q};1/q)_\infty+(\sqrt{1/q};1/q)_\infty }. 
\end{align*}

In the case $s>2$ however, we can give bounds on \eqref{denskps} using the Andrews-Gordon identity.

\begin{theorem}\label{densbound}
Let $\mathcal{R}$ have a residue field of size $q$ and $s$ be its nilpotency index. Let $\ell$ and $n$ be positive integers with $\ell=Rns$, where $0<R<1$ and $s \mid \ell$.
The density as $n \to \infty$ of free submodules in $\mathcal{R}^n$ of length $\ell$ given by \eqref{denskps} can be bounded as follows:
\begin{align*}
    AGI\left(1/q,s\right)^{-1} \leq \left(\sum\limits_{\substack{k_2, \ldots, k_s \geq 0 \\ s \mid K_2 + \cdots + K_s}} \frac{(1/q)^{K_2^2 + \cdots+ K_s^2  -(K_2+ \cdots +K_s)^2/s  }}{(1/q)_{k_2} \cdots (1/q)_{k_s}} \right)^{-1} 
    \leq AGI(1/q',s)^{-1},
\end{align*}
for $q' := q^{s^2-s}.$
\end{theorem}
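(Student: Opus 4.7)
The plan is to re-express the series $S(q,s)$ defined as the sum inside the parentheses of \eqref{denskps} in the same variables $n_1,\ldots,n_{s-1}$ as the Andrews--Gordon identity, and then to establish both bounds by explicit term-by-term comparisons. Substituting $n_{i-1}=k_i$ for $i=2,\ldots,s$ and reversing the order of indices, one rewrites
\begin{equation*}
S(q,s)=\sum_{\substack{n_1,\ldots,n_{s-1}\ge 0\\ s\mid T}}\frac{(1/q)^{E-T^2/s}}{(1/q)_{n_1}\cdots(1/q)_{n_{s-1}}},
\end{equation*}
where $N_i:=n_i+\cdots+n_{s-1}$, $E:=N_1^2+\cdots+N_{s-1}^2$ and $T:=N_1+\cdots+N_{s-1}=\sum_{j=1}^{s-1}jn_j$. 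In this form $AGI(1/q,s)$ is exactly the analogous \emph{unrestricted} sum with $(1/q)^E$ in place of $(1/q)^{E-T^2/s}$, so the theorem reduces to the sandwich $AGI(1/q',s)\le S(q,s)\le AGI(1/q,s)$.

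Two elementary inequalities drive the argument. First, $T^2=E+2\sum_{i<j}N_iN_j\ge E$, since $N_i\ge 0$ for all $i$. Second, for every $k\ge 0$ one has the Pochhammer comparison $(1/q)_{sk}\le(1/q^{s^2-s})_k$: discarding the $(s-1)k$ factors $(1-q^{-j})\le 1$ of $(1/q)_{sk}$ whose index $j\le sk$ is not a multiple of $s$ gives $(1/q)_{sk}\le(1/q^s)_k$, and then $s\le s^2-s$ (for $s\ge 2$) implies $1-q^{-si}\le 1-q^{-(s^2-s)i}$ factor by factor, so $(1/q^s)_k\le(1/q^{s^2-s})_k$.

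For the lower bound $AGI(1/q',s)\le S(q,s)$, the plan is to use the injection $n\mapsto sn=(sn_1,\ldots,sn_{s-1})$ from the $AGI$ index set into the index set of $S(q,s)$; its image trivially satisfies $s\mid T(sn)=sT(n)$. Since $N_i(sn)=sN_i(n)$, a direct computation gives $E(sn)=s^2E(n)$ and $T(sn)^2/s=sT(n)^2$, so the $S$-exponent at $sn$ equals $s^2E(n)-sT(n)^2\le(s^2-s)E(n)$ by $T^2\ge E$. As $0<1/q<1$, this yields $(1/q)^{s^2E(n)-sT(n)^2}\ge(1/q')^{E(n)}$, while the componentwise Pochhammer inequality gives $\bigl(\prod_i(1/q)_{sn_i}\bigr)^{-1}\ge\bigl(\prod_i(1/q')_{n_i}\bigr)^{-1}$. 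Multiplying, the $S$-summand at $sn$ dominates the $AGI(1/q',s)$-summand at $n$; summing over $n\ge 0$ and noting that $\{sn:n\ge 0\}$ lies in the index set of $S(q,s)$ finishes the lower bound.

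The upper bound $S(q,s)\le AGI(1/q,s)$ is the more delicate inequality and is the main obstacle. A naive termwise comparison fails: at any common index $n$ with $s\mid T$, the $S$-summand equals $q^{T^2/s}$ times the $AGI$-summand and so is strictly larger, and it is not immediately clear that discarding the indices with $s\nmid T$ compensates. The strategy is to group both sums by the value of $T$: setting $V_m:=\sum_{T(n)=m}(1/q)^{E(n)}/\prod_i(1/q)_{n_i}$, one has $AGI(1/q,s)=\sum_{m\ge 0}V_m$ and $S(q,s)=\sum_{k\ge 0}q^{sk^2}V_{sk}$, so the desired inequality becomes $\sum_{k\ge 1}(q^{sk^2}-1)V_{sk}\le\sum_{m\ge 1,\,s\nmid m}V_m$. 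On the fiber $T=sk$ the Cauchy--Schwarz estimate $E\ge T^2/(s-1)=s^2k^2/(s-1)$ shows that $q^{sk^2}V_{sk}$ is controlled by $(1/q)^{sk^2/(s-1)}$ up to a polynomial factor in $k$, so the left-hand side decays rapidly in $k$. The hard step is to dominate it by the missing-residue terms $V_m$ with $s\nmid m$; I expect this to follow from a careful comparison exploiting that the first few $V_m$'s (for $1\le m\le s-1$) already contribute order-$1$ mass to the right-hand side, but formalising this is where the main work of the proof lies.
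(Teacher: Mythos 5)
Your reindexing of the series in \eqref{denskps} into the Andrews--Gordon variables is correct, and your proof of the inequality $AGI(1/q',s)\le S(q,s)$ (equivalently, the upper density bound $AGI(1/q',s)^{-1}$) is complete and is essentially the paper's own argument: the paper restricts the sum to indices with $s\mid K_i$ for all $i$, which is exactly the image of your injection $n\mapsto sn$, uses $(\sum N_i)^2\ge\sum N_i^2$ to bound the exponent by $(s^2-s)\sum N_i^2$, and then the same two Pochhammer comparisons to pass to base $q'=q^{s^2-s}$.

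The other inequality, $S(q,s)\le AGI(1/q,s)$ (the lower density bound $AGI(1/q,s)^{-1}$), is not proved in your proposal: the regrouping by the value of $T$ and the reduction to $\sum_{k\ge1}(q^{sk^2}-1)V_{sk}\le\sum_{m\ge1,\,s\nmid m}V_m$ is only a restatement of what must be shown, and the closing remark that the terms $V_m$ with $1\le m\le s-1$ ``already contribute order-$1$ mass'' is a heuristic, not an estimate; nothing in the proposal actually dominates the inflated fibers $q^{sk^2}V_{sk}$ by the omitted ones. The idea you are missing is that the compensation need not come from the indices with $s\nmid T$ at all: by Cauchy--Schwarz one has $(s-1)E\ge T^2$ at \emph{every} index, hence the exponent satisfies $E-T^2/s\ge E/s$ pointwise, so each $S$-summand is at most $(1/q)^{E/s}/\prod_i(1/q)_{n_i}$. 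The paper then compares such a term with an $AGI(1/q,s)$-summand at a \emph{scaled-down} index (it chooses a rational $a/b$ with $a^2/b^2\le 1/s$ and replaces $k_i$ by $n_i=(a/b)k_i$, using monotonicity of the Pochhammer symbols) --- i.e.\ a mirror image of the injection $n\mapsto sn$ that you yourself used for the other direction. So the termwise comparison you dismissed does work once the indices on the $AGI$ side are shrunk rather than matched; without this (or a completed fiber-by-fiber estimate) the left-hand inequality of the theorem, and hence Theorem \ref{largep} which relies on it, remains unproven in your write-up.
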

 
 \begin{proof}
 We first show that 
 \begin{align*}
  \left(\sum\limits_{\substack{k_2, \ldots, k_s \geq 0 \\ s \mid K_2 + \cdots + K_s}} \frac{(1/q)^{K_2^2 + \cdots+ K_s^2  -(K_2+ \cdots +K_s)^2/s  }}{(1/q)_{k_2} \cdots (1/q)_{k_s}} \right)^{-1} 
    \geq AGI(1/q,s)^{-1}. 
\end{align*}
For this, we first note that
$$ K_2^2 + \cdots +K_s^2 - (K_2 + \cdots +K_s)^2/s > \frac{1}{s} \left( K_2^2 + \cdots +K_s^2\right).$$
This is equivalent to 
$$(s-1) (K_2^2+ \cdots + K_s^2)  > (K_2 + \cdots + K_s)^2,$$
which follows from the Cauchy-Schwarz inequality. Note, that there always exists some $a,b \in \mathbb{N}$, such that $\frac{1}{s} > \frac{a^2}{b^2},$ with $b \mid K_i$ for all $i \in \{2, \ldots, s\}$. For example one can always choose $a=1$ and $b=sK_1 \cdots K_s.$
We define $n_i= \frac{a}{b} k_i$ and $N_i= \sum_{j=2}^i n_j,$ which is thus $N_i = \frac{a}{b} K_i$ for all $i \in \{2, \ldots, s\}$.
Since $n_i < k_i$ and hence $(1/q)_{k_i} < (1/q)_{n_i}$, we can bound our sum  using the observation above and the Andrews-Gordon identity as
\begin{align*}
     \sum\limits_{\substack{k_2, \ldots, k_s \geq 0 \\ s \mid K_2 +  \cdots +  K_s}} \frac{(1/q)^{K_2^2 + \cdots+ K_s^2  -(K_2+ \cdots +K_s)^2/s  }}{(1/q)_{k_2} \cdots (1/q)_{k_s}} 
    \leq &  \sum\limits_{\substack{k_2, \ldots, k_s \geq 0 \\ s \mid K_2 +  \cdots + K_s}} \frac{(1/q)^{\frac{a^2}{b^2}(K_2^2 + \cdots+ K_s^2)}}{(1/q)_{k_2} \cdots (1/q)_{k_s}}   \\
        \leq &  \sum\limits_{k_2, \ldots, k_s \geq 0 } \frac{(1/q)^{\frac{a^2}{b^2}(K_2^2 + \cdots+ K_s^2)}}{(1/q)_{k_2} \cdots (1/q)_{k_s}}   \\
       \leq &  \sum\limits_{n_2, \ldots, n_s \geq 0 } \frac{(1/q)^{N_2^2 + \cdots+ N_s^2 }}{(1/q)_{n_2} \cdots (1/q)_{n_s}}  \\
       =& ~ AGI(1/q,s).
\end{align*}

We now show that 
\begin{align*}
     \left(\sum\limits_{\substack{k_2, \ldots, k_s \geq 0 \\ s \mid K_2 + \cdots + K_s}} \frac{(1/q)^{K_2^2 + \cdots+ K_s^2  -(K_2+ \cdots +K_s)^2/s  }}{(1/q)_{k_2} \cdots (1/q)_{k_s}} \right)^{-1} \leq AGI\left(1/q^{s^2-s},s\right)^{-1}.
\end{align*}
In order to give this upper bound on the density, we lower bound the series by taking fewer summands. More precisely, we assume that $s \mid K_i$ for all $i \in \{2, \ldots, s\}.$ Thus, we can introduce new variables $N_i=K_i/s$ as well as $n_i= k_i/s$ for all $i \in \{2, \ldots, s\}.$ 
We can thus bound the density with $AGI(1/q',s)^{-1},$ where $q'= q^{s^2-s}.$
\begin{align*}
     \sum\limits_{\substack{k_2, \ldots, k_s \geq 0 \\ s \mid K_2 + \cdots + K_s }} \frac{(1/q)^{K_2^2 + \cdots+ K_s^2  -(K_2+ \cdots +K_s)^2/s  }}{(1/q)_{k_2} \cdots (1/q)_{k_s}} 
    \geq  &  \sum\limits_{\substack{k_2, \ldots, k_s \geq 0 \\ s \mid K_2, \ldots, s \mid K_s }} \frac{(1/q)^{K_2^2 + \cdots+ K_s^2 -(K_2+ \cdots +K_s)^2/s  }}{(1/q)_{k_2} \cdots (1/q)_{k_s}} \\
    \geq  &  \sum\limits_{k_2, \ldots, k_s \geq 0 } \frac{(1/q)^{(s^2-s)(N_2^2 + \cdots+ N_s^2)  }}{(1/q)_{k_2} \cdots (1/q)_{k_s}} \\
  \geq  &  \sum\limits_{n_2, \ldots, n_s \geq 0 } \frac{(1/q^{s^2-s})^{N_2^2 + \cdots+ N_s^2 }}{(1/q)_{n_2} \cdots (1/q)_{n_s}} \\
    \geq  &  \sum\limits_{n_2, \ldots, n_s \geq 0 } \frac{(1/q')^{N_2^2 + \cdots+ N_s^2 }}{(1/q')_{n_2} \cdots (1/q')_{n_s}} \\ 
    = & ~  AGI(1/q',s),
\end{align*}
where we have used that $(1/q)_{k_i} \leq (1/q)_{n_i} $ and $(1/q)_{n_i} \leq (1/q^{s^2-s})_{n_i}$ for all $i \in \{2, \ldots, s\}.$ 
Hence, we get the claim.
\end{proof}

Values for the upper and lower bound, as well as the exact values can be found in Table \ref{table:densities}. 
This shows that the density for growing $n$ of free modules of a given length over a finite chain ring is neither dense nor sparse, if $q$ and $s$ are fixed.

\begin{table}[ht]
 \begin{center}
 \begin{tabular}{|c|c| c|c|c|}
 \hline 
  &  & Lower Bound &  Exact & Upper Bound  \\\hline
 \multirow{5}{*}{$s=2$} &  $q=2$ & 0.46026 & 0.59546 & 0.74688 \\
  & $q=3$ & 0.65750  & 0.84191 & 0.88752 \\
  & $q=5$ & 0.79867  & 0.95049 & 0.95999 \\
  & $q=7$ & 0.85678 & 0.97627 & 0.97959 \\
  & $q=11$ &  0.90903 & 0.99092 & 0.99173 \\ \hline 
 \multirow{5}{*}{$s=3$} &  $q=2$ & 0.35536  & 0.47084 & 0.98413 \\
  & $q=3$ & 0.58922 & 0.79666 & 0.99862  \\
  & $q=5$ & 0.76770 & 0.94102 & 0.99994\\
  & $q=7$ & 0.83959 & 0.97295 & $1-8.5 \cdot 10^{-6}$\\
  & $q=11$ & 0.90157 & 0.99010 & $1-5.6 \cdot 10^{-7}$\\ \hline 
   \multirow{5}{*}{$s=4$} &  $q=2$ & 0.31866 & 0.42109 & 0.99976 \\
  & $q=3$ & 0.56950 & 0.78230 & $1-1.8 \cdot 10^{-6}$\\
  & $q=5$ & 0.76180 & 0.93915 & $1-4.1 \cdot 10^{-9}$\\
  & $q=7$ & 0.83719 & 0.97248 & $1-7.2 \cdot 10^{-11}$\\
  & $q=11$ & 0.90090 & 0.99023 &  $1-3.2 \cdot 10^{-13}$\\ \hline 
  \end{tabular}
\end{center}  
  \caption{Density of free modules  in $\mathcal{R}^n$ of a given length}
  \label{table:densities}
\end{table}

\begin{remark}
 It can easily be seen that the probability for a submodule  to be not free, is upper bounded by the counter probability of a submodule to be free. Explicitly,  the density of submodules having type $(s^{k_1} \cdots 1^{k_s})$ such that $(k_1, \ldots, k_s) \in L(s,n,\ell)$ with $s  k_1 \neq \ell$ is upper bounded by
$1- AGI(1/q,s)^{-1}.$
\end{remark}

\vspace{0.5cm}

As a next step we consider the behaviour of $\psi(n,\ell,q,s)$ for large $q$.
It turns out, that for $ q \to \infty$ we have that modules are with high probability free. This is quite intuitive, as for increasing $q$, we are closer to a finite field structure. While the opposite is true for increasing $s$; thus we expect that the probability of having a free module decreases when $s$ increases.

  In the following theorem we prove a stronger statement that   implies the above remark. The question we want to answer reads as follows: for a fixed choice of $n,\ell,s$ and an $0<\varepsilon<1$, how large should we choose $q$ (in terms of $n,\ell,s$ and $\varepsilon$) such that the probability of having a free submodule  of $\mathcal{R}^n$ of given length $\ell$   is at least $1- \varepsilon$?

 \begin{theorem}\label{largep}
Let $\mathcal{R}$ have a residue field of size $q$ and nilpotency index $s$. Let $0<\varepsilon <1$. Then for $q$ satisfying $ AGI(1/q,s)^{-1} \geq 1-\varepsilon$, 
we have that the probability of a random submodule of $\mathcal{R}^n$ of length $\ell$ to be free is at least $1- \varepsilon.$
 \end{theorem}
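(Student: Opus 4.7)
The plan is to establish, for every valid pair $(n,\ell)$ with $s\mid\ell$, the pointwise inequality
\[
\psi(n,\ell,q,s)\;\ge\;AGI(1/q,s)^{-1},
\]
from which the theorem follows immediately by the hypothesis on $q$.

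First, I would write
\[
\psi(n,\ell,q,s)^{-1} = \sum_{(k_1,\ldots,k_s)\in L(s,n,\ell)} \frac{N_{n,q}(k_1,\ldots,k_s)}{N_{n,q}(\ell/s,0,\ldots,0)},
\]
and parametrize the sum by $(k_2,\ldots,k_s)$, since $k_1$ is forced by the length constraint $\sum_{i=1}^s k_i(s-i+1)=\ell$. The feasibility requirement $k_1\in\mathbb{Z}_{\ge 0}$ imposes the divisibility condition $s\mid\sum_{i=2}^s(i-1)k_i$, which is exactly the constraint $s\mid K_2+\cdots+K_s$ with $K_i=\sum_{j=2}^i k_j$ that appears in Theorem \ref{densbound}.

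Next, I would simplify the ratio $N_{n,q}(k_1,\ldots,k_s)/N_{n,q}(\ell/s,0,\ldots,0)$ using the explicit formula for $N_{n,q}$, converting all Gaussian coefficients to $1/q$-form via the identity $\gb{n}{k}_q = q^{(n-k)k}\gb{n}{k}_{1/q}$, and bounding residual Gaussian coefficients using $\gb{\cdot}{\cdot}_{1/q} \le 1/(1/q)_\infty$. This should reduce the ratio to a power of $1/q$, with exponent collapsing to the quadratic form $K_2^2+\cdots+K_s^2-(K_2+\cdots+K_s)^2/s$ from Theorem \ref{densbound}, multiplied by reciprocals of the $q$-Pochhammer symbols $(1/q)_{k_2}\cdots(1/q)_{k_s}$. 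The finite sum $\psi(n,\ell,q,s)^{-1}$ is then bounded from above by the same series appearing in Theorem \ref{densbound}, to which the Cauchy-Schwarz comparison with $AGI(1/q,s)$ applies verbatim. Combined with the hypothesis $AGI(1/q,s)^{-1}\ge 1-\varepsilon$, this yields $\psi(n,\ell,q,s)\ge 1-\varepsilon$.

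The main obstacle will be the arithmetic of the exponents: verifying that the powers of $q$ arising from $N_{n,q}(k_1,\ldots,k_s)/N_{n,q}(\ell/s,0,\ldots,0)$ collapse precisely to the quadratic form used in Theorem \ref{densbound}, so that the Cauchy-Schwarz/Andrews-Gordon estimate goes through. Given the close match between the expressions in Theorems \ref{thm:dens} and \ref{densbound} and the block structure of $N_{n,q}$, this bookkeeping should succeed without further difficulty, and then the result is a direct consequence of the inequalities established there.
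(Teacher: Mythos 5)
Your strategy is the same as the paper's: bound $\psi(n,\ell,q,s)^{-1}$ term by term, observe that after eliminating $k_1$ via the length constraint the finite sum over $L(s,n,\ell)$ (with the divisibility condition $s\mid K_2+\cdots+K_s$, which you identify correctly) is a sub-sum of the infinite series in \eqref{denskps}, and then apply the lower bound of Theorem \ref{densbound} together with the hypothesis on $q$. The exponent bookkeeping you defer is exactly the quadratic-form identity already established in the proof of Theorem \ref{thm:dens}, so that part goes through.

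The one step that fails as you state it is the reduction of the ratio $N_{n,q}(k_1,\ldots,k_s)/N_{n,q}(\ell/s,0,\ldots,0)$ to $q^{-Q}\bigl((1/q)_{k_2}\cdots(1/q)_{k_s}\bigr)^{-1}$, with $Q=K_2^2+\cdots+K_s^2-(K_2+\cdots+K_s)^2/s$, by ``bounding residual Gaussian coefficients using $\gb{\cdot}{\cdot}_{1/q}\le 1/(1/q)_\infty$.'' Each such estimate costs a factor $1/(1/q)_\infty>1$, so you would only obtain $\psi^{-1}\le C\cdot AGI(1/q,s)$ for some constant $C>1$ (a power of $(1/q)_\infty^{-1}$), hence $\psi\ge C^{-1}AGI(1/q,s)^{-1}$; under the hypothesis $AGI(1/q,s)^{-1}\ge 1-\varepsilon$ this gives only $\psi\ge C^{-1}(1-\varepsilon)$, which is strictly weaker than the claimed $1-\varepsilon$. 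No lossy bound on Gaussian coefficients is needed: writing every $1/q$-Gaussian coefficient as a quotient of Pochhammer symbols, the product in $N_{n,q}(k_1,\ldots,k_s)$ telescopes to $\frac{(1/q)_n}{(1/q)_{n-K}(1/q)_{k_1}\cdots(1/q)_{k_s}}$, so each summand of $\psi^{-1}$ equals $q^{-Q}\,\frac{(1/q)_{n-\ell/s}(1/q)_{\ell/s}}{(1/q)_{n-K}(1/q)_{k_1}\cdots(1/q)_{k_s}}$. Since $(1/q)_a\le(1/q)_b$ whenever $b\le a$, and $K\ge \ell/s$ while $k_1\le \ell/s$, the prefactor $\frac{(1/q)_{n-\ell/s}(1/q)_{\ell/s}}{(1/q)_{n-K}(1/q)_{k_1}}$ is at most $1$, leaving exactly $q^{-Q}\bigl((1/q)_{k_2}\cdots(1/q)_{k_s}\bigr)^{-1}$ with no extra constant. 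This is precisely how the paper argues; with this replacement your proof is complete and coincides with the paper's.
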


 \begin{proof}
 For $\ell \leq ns$ positive integers with $s \mid \ell$, the probability that a submodule of $\mathcal{R}^n$ is free is given by

  $$ \psi(n,\ell,q,s) =  \frac{\gb{n}{\ell/s}_{1/q}}{ \sum\limits_{(k_1, \ldots, k_s) \in L(s,n,\ell)} q^{\ell^2/s - \sum\limits_{i=1}^s \left(\sum\limits_{j=1}^i k_j\right)^2}  \frac{(1/q)_n}{(1/q)_{n-K}(1/q)_{k_1} \cdots (1/q)_{k_s}}},$$
 where we have used $\gb{n}{k}_q=q^{(n-k)k} \gb{n}{k}_{1/q}$.
 
We can hence write
\begin{align*}
     \psi(n,\ell,q,s) & = \left( \sum\limits_{(k_1, \ldots, k_s) \in L(s,n,\ell)} q^{\ell^2/s - \sum_{i=1}^s \left( \sum_{j=1}^i k_j \right)^2} \frac{(1/q)_{n-\ell/s} (1/q)_\ell}{(1/q)_{n-K} (1/q)_{k_1} \cdots (1/q)_{k_s}}  \right)^{-1}.
      \end{align*}

 Note that for $b \leq a$ we have $(1/q)_a \leq (1/q)_b$, thus we get that 
     $$ \psi(n,\ell,q,s)^{-1} \leq \sum\limits_{(k_1, \ldots, k_s) \in L(s,n,\ell)} q^{\ell^2/s - \sum\limits_{i=1}^s \left(\sum\limits_{j=1}^i k_j\right)^2}\frac{1}{(1/q)_{k_2} \cdots (1/q)_{k_s}}.$$

Since this is the finitization of, and thus smaller than, \eqref{denskps}, we get 
$$\psi(n,\ell,q,s)^{-1} \leq \sum\limits_{\substack{k_2, \ldots, k_s \geq 0 \\ s \mid k_2(s-1) + \cdots + k_s}  } \frac{(1/q)^{K_2^2 + \cdots K_s^2 -(K_2 + \cdots + K_s)^2/s}}{(1/q)_{k_2} \cdots (1/q)_{k_s}}.$$
Using Theorem \ref{densbound} we can bound this probability further using the Andrews-Gordon identity, i.e.,
$$\psi(n,\ell,q,s)\geq AGI(1/q,s)^{-1} \geq 1- \varepsilon.$$
 \end{proof}

 Since $ AGI(1/q,s)^{-1}  \geq (1/q)_\infty$ we get the following statement, which does not depend on $s$.
 
 \begin{corollary}
Let  $\mathcal{R}$ have a residue field of size $q$ and $s$ be its nilpotency index and let $\ell< s n$ be positive integers with $s \mid \ell$. The probability for a submodule of $\mathcal{R}^n$ of length $\ell$ to be free is at least $(1/q)_\infty$.
 \end{corollary}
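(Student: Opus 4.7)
The plan is to combine the inequality proved inside Theorem \ref{largep} with a simple product-comparison between Euler's function and the Andrews--Gordon expression. The proof of Theorem \ref{largep} establishes, unconditionally for any admissible $n,\ell,q,s$ with $s\mid \ell$, that
\[ \psi(n,\ell,q,s)\ \geq\ AGI(1/q,s)^{-1}, \]
so the only thing left to verify is the absolute lower bound
\[ AGI(1/q,s)^{-1}\ \geq\ (1/q)_\infty. \]

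For this, I would use the product form of the Andrews--Gordon identity recalled just after Theorem \ref{thm:dens}, namely
\[ AGI(1/q,s)^{-1}\ =\ \prod_{n\not\equiv 0,\pm i \pmod{2s+1}}\bigl(1-(1/q)^n\bigr), \]
compared with
\[ (1/q)_\infty\ =\ \prod_{n=1}^\infty\bigl(1-(1/q)^n\bigr). \]
Since $0<1/q<1$, every factor $(1-(1/q)^n)$ lies in the open interval $(0,1)$, so removing the factors indexed by $n\equiv 0,\pm i \pmod{2s+1}$ can only increase the value of the product. In other words, the product on the right has strictly more (or equal) factors, each less than $1$, than the product on the left, hence it is no larger. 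This yields the desired inequality.

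Putting the two steps together gives $\psi(n,\ell,q,s)\geq (1/q)_\infty$, which is the claim. I do not anticipate any real obstacle here: the argument is essentially an observation that the Andrews--Gordon bound of Theorem \ref{largep} is majorised by Euler's function, and the only care needed is to ensure that the index set of omitted factors is well-defined (which it is, for any choice of $i\in\{1,\dots,s\}$ in the Andrews--Gordon identity).
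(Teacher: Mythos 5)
Your proposal is correct and follows essentially the same route as the paper: the paper's corollary is obtained exactly by citing the inequality $\psi(n,\ell,q,s)\geq AGI(1/q,s)^{-1}$ from the proof of Theorem \ref{largep} together with the observation $AGI(1/q,s)^{-1}\geq (1/q)_\infty$, which (as you note) is immediate because $AGI(1/q,s)^{-1}$ is the product of the factors $\bigl(1-(1/q)^n\bigr)\in(0,1)$ over a subset of the positive integers, hence at least the full product $(1/q)_\infty$. The only cosmetic point is that the product form quoted (with moduli $0,\pm s \bmod{2s+1}$) corresponds to the case $i=s$ of the Andrews--Gordon identity, but your comparison argument works for any such index set, so nothing is affected.
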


 Since $(1/q^s;1/q^{2s+1})_\infty, (1/q^{s+1}; 1/q^{2s+1})_\infty$ and also $(1/q^{2s+1})_\infty$ go to 1 as $q$ or $s$ go to infinity, and $(1/q)_\infty$ goes to $1$ for $q\rightarrow \infty$,  also the intuitive remarks  on the limits in $s$ and $q$ now follow as corollaries. 
 \begin{corollary}
Let $\mathcal{R}$ have a residue field of size $q$ and let $s$ be its nilpotency index. 
\begin{enumerate}
    \item 
The density of free submodules  in $\mathcal{R}^n$  of length $\ell$ for $q \to \infty$ is 1.

\item
The density of free submodules in $\mathcal{R}^n$ of length $\ell$ for $s \to \infty$  is at least $(1/q)_\infty$.
\end{enumerate}
 \end{corollary}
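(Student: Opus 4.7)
My plan is to combine the lower bound $\psi(n,\ell,q,s)\geq AGI(1/q,s)^{-1}$ from Theorem~\ref{largep} with the trivial upper bound $\psi\leq 1$, and then push the parameters through the product form
\[
AGI(1/q,s)=\frac{(1/q^{s};1/q^{2s+1})_\infty\,(1/q^{s+1};1/q^{2s+1})_\infty\,(1/q^{2s+1};1/q^{2s+1})_\infty}{(1/q)_\infty}
\]
highlighted immediately before the corollary. In both regimes, the work reduces to verifying that every infinite product appearing on the right behaves as its term-by-term limit predicts.

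For part (1), I would fix $s$ (and $n,\ell$) and send $q\to\infty$. Each Pochhammer symbol $(1/q^{a};1/q^{2s+1})_\infty$ in the numerator expands as $\prod_{i\geq 0}(1-q^{-a-i(2s+1)})$ with $a\in\{s,s+1,2s+1\}$, so every factor in the product has base tending to $0$ and hence the whole product tends to $1$. The denominator $(1/q)_\infty$ likewise tends to $1$. Thus $AGI(1/q,s)\to 1$, and sandwiching $\psi$ between $AGI(1/q,s)^{-1}$ and $1$ yields the claim.

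For part (2), I would fix $q$ (and $n,\ell$) and send $s\to\infty$. The three numerator Pochhammer symbols have first arguments $1/q^{s}$, $1/q^{s+1}$, $1/q^{2s+1}$, all tending to $0$, so each factor tends to $1$, while the denominator $(1/q)_\infty$ is independent of $s$. Therefore $AGI(1/q,s)\to 1/(1/q)_\infty$, giving $AGI(1/q,s)^{-1}\to (1/q)_\infty$, and the desired bound $\liminf_{s\to\infty}\psi(n,\ell,q,s)\geq (1/q)_\infty$ follows from Theorem~\ref{largep}.

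The only technical point needing care is the termwise passage to the limit inside each infinite product. This is routine: each factor is of the form $1-q^{-c}$ with $q^{-c}\to 0$, and the tail is dominated by a convergent geometric series, so uniform convergence of the partial products lifts to the claimed limits. Beyond Theorem~\ref{largep} and the Andrews-Gordon product expression, no additional combinatorial input is required, which is why the two statements fall out as immediate corollaries.
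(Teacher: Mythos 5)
Your proposal is correct and follows essentially the same route as the paper: the paper also deduces the corollary from the bound $\psi(n,\ell,q,s)\geq AGI(1/q,s)^{-1}$ together with the product form of the Andrews--Gordon identity, noting that the three numerator Pochhammer symbols tend to $1$ as $q$ or $s\to\infty$ while $(1/q)_\infty\to 1$ only as $q\to\infty$. Your added remark on the termwise passage to the limit inside the infinite products is a routine justification the paper leaves implicit.
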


\subsection{Density of Free Modules of Given Rank}
As before, let $\mathcal{R}$ have a residue field of size $q$ and $s$ be its nilpotency index. 
In this subsection we study the asymptotics of 
$$\varphi(n,K,q,s) := \frac{q^{(n-K)K(s-1)} \gb{n}{K}_q}{W(n,K,q,s)},$$ 
i.e., the density of free modules of given rank $K$.  We first rewrite $\varphi(n,K,q,s)$, as follows
\begin{align}
    & \varphi(n,K,q,s) \nonumber \\  & = \frac{\frac{(1/q)_n}{(1/q)_{n-K}(1/q)_K}}{ q^{-s(n-K)K} \sum\limits_{(k_1, \ldots, k_s) \in C(s,K)} q^{n \ell- \sum_{i=1}^s\left(\sum_{j=1}^i k_j\right)^2} \frac{(1/q)_n}{(1/q)_{n-K} (1/q)_{k_1} \cdots (1/q)_{k_s} } } \nonumber \\
    & = \left((1/q)_K \sum\limits_{(k_1, \ldots, k_s) \in C(s,K)}q^{n\ell-snK+sK^2} \frac{(1/q)^{\sum_{i=1}^s \left(\sum_{j=1}^i k_j \right)^2}}{(1/q)_{k_1} \cdots (1/q)_{k_s}} \right)^{-1} \label{varphi}.
\end{align}

In contrast to the density from Theorem \ref{thm:dens}, we have that the asymptotics of $\varphi(n,K,q,s)$ depends on the relation between $K$ and $n$. 
  
  \begin{theorem}
  Let $\mathcal{R}$ have a residue field of size $q$ and $s$ be its nilpotency index.  Let $K$ and $n$ be positive integers with $K= R'n$, where $1/2 < R'< 1$. 
 The density of   free submodules of $\mathcal{R}^n$ of given rank $K$ for $n \to \infty$  is $0$.
  \end{theorem}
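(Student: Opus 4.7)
The plan is to lower-bound the denominator $W(n,K,q,s)$ by the contribution of a single cleverly chosen non-free type and to show that this contribution alone exceeds the free count by an exponentially growing factor whenever $K > n/2$. The natural candidate is the type $\tau=(s^{K-1}(s-1)^1)$, obtained from the free type $(s^K)$ by demoting exactly one generator from order $\pi^s$ to order $\pi^{s-1}$; this is the rank-$K$ type that most closely resembles the free one, so one expects it to ``almost'' match the free count and beat it by the extra room that non-freeness buys.

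Concretely, I would start from the counting formula in Section \ref{sec:countnew}: writing $K_i = \sum_{j=1}^i k_j$,
$$N_{n,q}(k_1,\ldots,k_s) = q^{\sum_{i=1}^s (n-K_i)K_{i-1}} \prod_{i=1}^s \gb{n-K_{i-1}}{k_i}_q.$$
For the free type, $K_i = K$ for all $i \ge 1$, recovering $q^{(s-1)(n-K)K}\gb{n}{K}_q$. For $\tau$, only $K_1 = K-1$ differs, so the exponent becomes $(n-K)\bigl[(s-1)K - 1\bigr]$ and the Gaussian product collapses to $\gb{n}{K-1}_q \gb{n-K+1}{1}_q$.

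Taking the ratio of the two, the exponential parts contribute $q^{-(n-K)}$, and the Gaussian parts simplify via $\gb{n}{k}_q = (q)_n/((q)_k(q)_{n-k})$ down to the single factor $\gb{K}{1}_q = (q^K-1)/(q-1) \ge q^{K-1}$. Hence
$$\frac{N_{n,q}(\tau)}{N_{n,q}(K,0,\ldots,0)} \ge q^{2K-n-1}.$$
Since $W(n,K,q,s) \ge N_{n,q}(\tau)$, this yields $\varphi(n,K,q,s) \le q^{-(2R'-1)n+1}$, which tends to $0$ as $n\to\infty$ because $R' > 1/2$.

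The simplification of the Gaussian product is the only nontrivial manipulation and is routine. The conceptual point, and the reason $1/2$ is the right threshold, is that demoting a single generator gains a Gaussian factor of order $q^{K-1}$ while costing an exponential prefactor of only $q^{n-K}$; these balance exactly at $K = n/2$, so the moment $R' > 1/2$ the gain strictly dominates and forces $\varphi\to 0$.
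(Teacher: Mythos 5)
Your proposal is correct, and it takes a genuinely different route from the paper. The paper's proof stays with the rewritten density \eqref{varphi}: it expresses $\varphi(n,K,q,s)^{-1}$ as $(1/q)_K$ times a sum over all types in $C(s,K)$, bounds the exponent $n\ell-snK+sK^2$ from below uniformly over the types (using $\ell\geq K$ and $n>s(n-K)$), and concludes that the sum grows like $q^{K}$, forcing $\varphi\to 0$. You instead drop all but one summand of $W(n,K,q,s)$: since $\tau=(s^{K-1}(s-1)^1)$ lies in $C(s,K)$, you have $W(n,K,q,s)\geq N_{n,q}(\tau)$, and your exact computation of the ratio is right --- the exponents differ by $-(n-K)$ and the Gaussian factors collapse via the $q$-multinomial identity $\gb{n}{K-1}_q\gb{n-K+1}{1}_q=\gb{n}{K}_q\gb{K}{1}_q$, giving $N_{n,q}(\tau)/N_{n,q}(K,0,\ldots,0)=q^{-(n-K)}\gb{K}{1}_q\geq q^{2K-n-1}$ and hence $\varphi(n,K,q,s)\leq q^{1-(2R'-1)n}\to 0$. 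What each approach buys: your two-term comparison is more elementary, yields an explicit exponential decay rate, and works uniformly for every $s\geq 2$ and the entire range $1/2<R'<1$; in particular it sidesteps the paper's uniform exponent estimate, whose auxiliary inequality $n>s(n-K)$ amounts to $R'>1-1/s$ and is therefore delicate when $s\geq 3$. The paper's route, on the other hand, keeps all types in view and ties this regime to the Andrews--Gordon-type series used for $R'\leq 1/2$, where the very same ``single demotion'' type $(s^{K-1}(s-1)^1)$ reappears as the dominant non-free correction with exponent roughly $-(n-2K)$ --- which confirms your heuristic that $R'=1/2$ is exactly where the gain $q^{K-1}$ from demoting one generator balances the cost $q^{n-K}$. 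The only caveats, shared with the paper, are that the argument implicitly needs $s\geq 2$ (for $s=1$ every module is free and the statement fails) and $K\geq 1$, both harmless under the hypotheses.
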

  
  \begin{proof}
We want to show that $\varphi(n,K,q,s)$ goes to 0 for $n \to \infty$. 
First, we observe that the exponent of $q$ in  \eqref{varphi} is 
\begin{align*}
    n\ell-snK+sK^2 \geq Kn -Kns+sK^2 \geq K,
\end{align*}
 where for the first inequality we have used that $\ell \geq K$ and the second inequality follows from 
$   n> s(n-K)$, which in turn follows from $1/s \geq 1/2 >1-K/n.$
Thus, we get  the limit of $\varphi(n,K,q,s)$ for $K>n/2$:

\begin{align*}
& \lim\limits_{n \to \infty} \varphi(n,K,q,s) \\
  & =   \lim\limits_{n \to \infty} \left( (1/q)_K \sum\limits_{(k_1, \ldots, k_s) \in C(s,K)}q^{n\ell-snK+sK^2} \frac{(1/q)^{\sum_{i=1}^s \left(\sum_{j=1}^i k_j \right)^2}}{(1/q)_{k_1} \cdots (1/q)_{k_s}} \right)^{-1} \\
  & \leq \lim\limits_{n \to \infty} \frac{1}{(1/q)_K} \frac{1}{ q^{K}} \left( \sum\limits_{K_1, \ldots, K_s \geq 0} \frac{(1/q)^{\sum_{i=1}^s K_i^2}}{(1/q)_{k_1} \cdots (1/q)_{k_s}} \right)^{-1} \\  &  = 0.
\end{align*}

\end{proof}

 \begin{theorem}
  Let $\mathcal{R}$ have a residue field of size $q$ and let $s$ be its nilpotency index.  Let $K$ and $n$ be positive integers with $K= R'n$, where $0 < R' \leq 1/2$. 
 The density of   free submodules of $\mathcal{R}^n$ of given rank $K$ is  greater than or equal to  
$AGI(1/q,s)^{-1}.$ 
Moreover, if $R' < 1/2$, the density is 1. 
  \end{theorem}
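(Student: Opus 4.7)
The plan is to start from the explicit formula \eqref{varphi} and reorganize the $q$-exponent so that the free-submodule contribution is isolated as a single summand equal to $1$. Setting $K_i := \sum_{j=1}^i k_j$, $d_i := K - K_i$, and $m := sK - \ell = \sum_{i=1}^s (i-1)k_i$, the identity $\sum_{i=1}^s K_i = \ell$ yields $\sum_i d_i = m$, and the elementary rewrite $K^2 - K_i^2 = (K+K_i)(K-K_i) = (2K - d_i) d_i$ summed over $i$ gives $sK^2 - \sum_i K_i^2 = 2Km - \sum_i d_i^2$. Combining these, the $q$-exponent appearing in \eqref{varphi} collapses to
\[
n\ell - snK + sK^2 - \sum_{i=1}^s K_i^2 \;=\; -m(n-2K) - \sum_{i=1}^{s-1} d_i^2,
\]
where I used $d_s = 0$.

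Next I would reparametrize by setting $a_i := k_{i+1}$ for $i = 1, \ldots, s-1$, so that $k_1 = K - \sum_{i=1}^{s-1} a_i$ and $d_i = A_i := a_i + a_{i+1} + \cdots + a_{s-1}$. The formula for $\varphi^{-1}$ becomes
\[
\varphi(n,K,q,s)^{-1} = \sum_{\substack{a_1, \ldots, a_{s-1} \geq 0 \\ a_1 + \cdots + a_{s-1} \leq K}} \frac{(1/q)_K}{(1/q)_{K - \sum_i a_i}\,\prod_{i=1}^{s-1}(1/q)_{a_i}} \,(1/q)^{m(n-2K) + \sum_{i=1}^{s-1} A_i^2}.
\]
Since $(1/q)_r$ is decreasing in $r$, the ratio $(1/q)_K / (1/q)_{K - \sum a_i}$ is at most $1$; moreover $m(n-2K) \geq 0$ whenever $K \leq n/2$, so $(1/q)^{m(n-2K)} \leq 1$ as well. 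Dropping the constraint $\sum a_i \leq K$ only enlarges the sum of non-negative terms, hence already for every finite $n$,
\[
\varphi(n,K,q,s)^{-1} \leq \sum_{a_1, \ldots, a_{s-1} \geq 0} \frac{(1/q)^{\sum_{i=1}^{s-1} A_i^2}}{\prod_{i=1}^{s-1}(1/q)_{a_i}} = AGI(1/q,s),
\]
which proves the first claim.

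For the case $R' < 1/2$, the unique summand with $a_1 = \cdots = a_{s-1} = 0$ corresponds to the free submodule (i.e., $k_1 = K$) and contributes exactly $1$ to $\varphi^{-1}$. Every other term satisfies $m \geq 1$, so $(1/q)^{m(n-2K)} \leq (1/q)^{n(1-2R')}$, a factor that vanishes as $n \to \infty$ uniformly in the remaining summation indices. Applying the AGI bound from the preceding paragraph to the non-free contribution yields
\[
1 \;\leq\; \varphi(n,K,q,s)^{-1} \;\leq\; 1 + (1/q)^{n(1-2R')} \cdot AGI(1/q,s),
\]
so $\varphi \to 1$ as $n \to \infty$. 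The principal obstacle is establishing the compact exponent identity above; the subsequent estimates are routine applications of $q$-Pochhammer monotonicity, and the AGI identity provides an $n$-independent majorant that removes the need for any dominated-convergence argument.
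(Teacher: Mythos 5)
Your proof is correct, and while it starts from the same formula \eqref{varphi} and follows the same skeleton as the paper (bound the non-free contribution by an Andrews--Gordon series; for $R'<1/2$ isolate the free summand, which equals $1$, and kill the rest with the factor $q^{-(n-2K)}$), the key computation is genuinely different. The paper shifts the exponent by $\ell^2/s$, verifies $n\ell-snK+sK^2-\ell^2/s\le 0$ from $K+\ell/s\le 2K\le n$, compares the remaining quadratic form with $AGI(1/q,s)$ by re-running the Cauchy--Schwarz/rescaling argument of Theorem \ref{densbound}, and for $R'<1/2$ needs a separate monotonicity-in-$\ell$ discussion to see that every non-free exponent is at most $-(n-2K)$. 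You instead establish the exact identity
\[
n\ell-snK+sK^2-\sum_{i=1}^s K_i^2 \;=\; -m(n-2K)-\sum_{i=1}^{s-1} d_i^2, \qquad m=sK-\ell,\quad d_i=K-K_i,
\]
which checks out ($\sum_i d_i=m$, $d_s=0$, $K^2-K_i^2=(2K-d_i)d_i$), and whose tail sums $d_i=k_{i+1}+\cdots+k_s$ match the $N_i$ in $AGI(1/q,s)$ term by term. This buys three things: the comparison with $AGI(1/q,s)$ needs no Cauchy--Schwarz or change of variables, only $(1/q)_K/(1/q)_{K-\sum a_i}\le 1$, $(1/q)^{m(n-2K)}\le 1$ for $2K\le n$, and enlarging the index set; the bound $\varphi(n,K,q,s)^{-1}\le AGI(1/q,s)$ is non-asymptotic, valid for every finite $n$ with $2K\le n$, rather than only in the limit; and the $R'<1/2$ case is immediate because every non-free term has $m\ge 1$, giving $1\le\varphi^{-1}\le 1+q^{-n(1-2R')}AGI(1/q,s)$, essentially the paper's final estimate but obtained without the argument about which non-free $\ell$ maximizes the exponent. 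Since $AGI(1/q,s)$ is a finite constant for $q>1$, your conclusion follows; the proof is complete as written.
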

  
  \begin{proof}

We wish to compute 
\begin{align*}  \lim\limits_{n \to \infty} \varphi(n,K,q,s)  & = \lim\limits_{n \to \infty} \left((1/q)_K \sum\limits_{(k_1, \ldots, k_s) \in C(s,K)}q^{n\ell-snK+sK^2} \frac{(1/q)^{\sum_{i=1}^s \left(\sum_{j=1}^i k_j \right)^2}}{(1/q)_{k_1} \cdots (1/q)_{k_s}} \right)^{-1} .
\end{align*}
Let us add and subtract to the exponent of $q$ a $\ell^2/s$, to get 
\begin{align*}
   \varphi(n,K,q,s) = \left( (1/q)_K \sum\limits_{(k_1, \ldots, k_s) \in C(s,K)}q^{n\ell-snK+sK^2-\ell^2/s} \frac{(1/q)^{\sum_{i=1}^s \left(\sum_{j=1}^i k_j \right)^2-\ell^2/s}}{(1/q)_{k_1} \cdots (1/q)_{k_s}} \right)^{-1}.
\end{align*}
We have that $n\ell-snK+sK^2-\ell^2/s \leq 0$ since $$K^2-(\ell/s)^2 \leq n(K-\ell/s),$$ which follows as $K+\ell/s \leq 2K \leq n.$ 
Hence, using a similar argument as in the proof of Theorem \ref{densbound} we get that 

\begin{align*}
  \lim\limits_{n \to \infty} \varphi(n,K,q,s) &= \lim\limits_{n \to \infty} \frac{\frac{1}{(1/q)_K } }{\sum\limits_{(k_1, \ldots, k_s) \in C(s,K)}q^{n\ell-snK+sK^2-\ell^2/s} \frac{(1/q)^{\sum_{i=1}^s \left(\sum_{j=1}^i k_j \right)^2-\ell^2/s}}{(1/q)_{k_1} \cdots (1/q)_{k_s}} } \\
  & \geq   \frac{1}{(1/q)_\infty}\left( \frac{1}{(1/q)_\infty} \sum\limits_{N_2, \ldots, N_s \geq 0} \frac{(1/q)^{\sum_{i=2}^s N_i^2}}{(1/q)_{n_2} \cdots (1/q)_{n_s}} \right)^{-1} \\ 
  & = AGI(1/q,s)^{-1}.
\end{align*}
Finally, we show that if $R'<1/2$, the density is 1. Let us consider again the probability $\varphi(n,K,q,s)$, which is 
$$\left( \sum_{(k_1, \ldots, k_s) \in C(s,K)} q^{n\ell-snK+sK^2 - \ell^2/s} (1/q)_K \frac{(1/q)^{\sum_{i=1}^s \left(\sum_{j=1}^i k_j \right)^2-\ell^2/s}}{(1/q)_{k_1} \cdots (1/q)_{k_s}} \right)^{-1}.$$
If we omit the type $(s^K (s-1)^0 \cdots 1^0)$ from the sum over $C(s,K)$ we get:  
\begin{align*}
 \left( 1 + \sum\limits_{\substack{(k_1, \ldots, k_s) \in C(s,K)  \\ k_1 \neq  K}}q^{n\ell-snK+sK^2-\ell^2/s} (1/q)_K \frac{(1/q)^{\sum_{i=1}^s \left(\sum_{j=1}^i k_j \right)^2 - \ell^2/s}}{(1/q)_{k_1} \cdots (1/q)_{k_s}}\right)^{-1}.
\end{align*}
Observe that whenever $\ell< s K$, the exponent of $q$ is  
$$n\ell-snK+sK^2-\ell^2/s \leq -(n-2K+1/s) \leq -(n-2K).$$ 
This follows from the fact, that $n\ell-snK+sK^2-\ell^2/s $ is decreasing for $\ell$ decreasing, and thus its largest value is reached at the largest $\ell \neq s K$, which is given by $K-1 + (s-1)/s$ corresponding to the type $(s^{K-1} (s-1)^1 (s-2)^ 0 \cdots 1^0).$

Now, we have that the probability of having a free module of given rank $K$ for $R'< 1/2$ is lower bounded by 
\begin{align*}
& \left( 1 + q^{-(n-2K)} \left( \sum\limits_{\substack{(k_1, \ldots, k_s) \in C(s,K)  \\ k_1 \neq  K}}(1/q)_K \frac{(1/q)^{\sum_{i=1}^s \left(\sum_{j=1}^i k_j \right)^2-\ell^2/s}}{(1/q)_{k_1} \cdots (1/q)_{k_s}}\right) \right)^{-1} \\ 
        & \geq \left( 1 + q^{-n(1-2R')} \left(\sum\limits_{K_2, \ldots, K_s \geq 0}  \frac{(1/q)^{K_2^2 + \cdots + K_s^2 -(K_2+ \cdots + K_s)^2/s}}{(1/q)_{k_2} \cdots (1/q)_{k_s}} \right) \right)^{-1},
\end{align*}
which clearly goes to 1, as $1-2R' >0$.
 
\end{proof}

\begin{table}[ht]
 \begin{center}
 \begin{tabular}{|c|c|c|c|c| }
 \hline 
 $q$ & $s$ & $K$ & $n$ & Probability  \\\hline
2 &2 & 50 & 100 & 0.460263  \\
2 & 2 & 40 & 100 &   0.999999 \\
2 & 2 & 60 & 100 &  $1.07 \cdot 10^{-31}$ \\
2 & 3 & 50 & 100  & 0.35536 \\
2 & 3 & 40 & 100  & 0.999999 \\
2 & 3 & 60 & 100  & $3.70 \cdot 10^{-62} $ \\
3 & 2 & 50 & 100 &   0.657496 \\ 
3 & 2 & 40 & 100 & $1-1.4\cdot 10^{-10}$   \\
3 & 2 & 60 & 100 &   $ 6.43\cdot 10^{-49}$  \\ \hline
  \end{tabular}
\end{center}  
  \caption{Probability of having a free submodule in $\mathcal{R}^n$ with rank $K$}
\end{table}

\section{Connections to Coding Theory}\label{sec:codes}

We turn now to the asymptotic behaviour of linear codes over a finite chain ring. This question is highly related to the results of the previous section, as a linear code over a finite chain ring $\mathcal{R}$ is simply an $\mathcal{R}$-module, albeit one that is studied in respect of a weight function defined on $\mathcal{R}^n$.

\begin{definition}
A (left) \emph{linear code} $\mathcal{C}$  over $\mathcal{R}$ of length $n$ is a (left) $\mathcal{R}$-submodule of $\mathcal{R}^n$. The free module $\mathcal{R}^n$ is called the {\em ambient space} of the code.
\end{definition}
 Since the length of a code already denotes the rank of the ambient space, we want to introduce a different parameter to denote the length $\ell$ of the module $\mathcal{C}.$ The usual parameter for ring-linear codes is also called type and either denotes size of the module, i.e., $\mid \mathcal{C} \mid =q^\ell$, or the frequencies of the parts of the type $\lambda$ of the module $\mathcal{C}$, i.e., $\left(\left(\pi^{s-1}\right)^{k_s} \cdots \left(1 \right)^{k_1} \right)$. However, to relate to the usual rate of a code,  we will call $$k:=\ell/s=\log_{|\mathcal{R}|}(|\mathcal{C} |) = \log_{q^s}(| \mathcal{C}|)$$ the $\mathcal{R}-$dimension of the code $\mathcal{C}$, since then the rate of $\mathcal{C}$ is given by $R= \frac{k}{n}$.
 

We summarize the density results derived in Section \ref{sec:dens}, which clearly apply immediately to linear codes.
Let $1\leq k \leq n$ be a fixed positive rational number with $ks \in \mathbb{N}$.
\begin{enumerate}
\item Let $0<\varepsilon <1$. 
For any $q$ satisfying $ (1/q)_\infty \geq 1-\varepsilon$, 
the probability that a random code of $\mathcal{R}$-dimension $k$ in $\mathcal{R}^n$ is free is at least $1- \varepsilon.$
    \item 
    The density of the free codes of $\mathcal{R}$-dimension $k$ in $\mathcal{R}^n$ is $1$ as $q \to \infty$.
        \item The density of the free codes of $\mathcal{R}$-dimension  $k$ in $\mathcal{R}^n$ is $(1/q)_\infty$ as $s \to \infty$.
    \item  Let $1/2 < R'< 1$ and let $K$ and $n$ be positive integers satisfying $K= R'n$. 
 The density of the free codes of rank $K$ in $\mathcal{R}^n$ is 0 as $n \to \infty$.
 \item  Let $0 < R' \leq 1/2$ and let $K$ and $n$ be positive integers satisfying $K= R'n$. 
 The density of the free codes of rank $K$ in $\mathcal{R}^n$ as $n \to \infty$ is at least $(1/q)_\infty.$ Moreover, if $R' < 1/2$, the density is 1. 
\end{enumerate}

The most prominent scalar ring arising in ring-linear coding is the Galois ring.
The $\mathbb{Z}/ 4 \mathbb{Z}$-linear codes are perhaps the most well-known of all codes over rings.
A great interest in codes over rings was generated by the discovery that several
infamous families of optimal but non-linear binary codes for the Hamming metric could be interpreted as linear codes over
$\mathbb{Z}/ 4 \mathbb{Z}$ for the Lee metric \cite{hammons}.
For codes over rings of odd characteristic, the Lee weight has been studied for its applications to partial response channels (see for example \cite{roth}). Codes defined over finite Frobenius rings are typically studied in relation to the homogeneous weight; due to its expression in terms of generating character of the underlying ring, several classical coding theoretic results extend for codes with respect to this metric (e.g. \cite{sneyd,grefsch,nech}). Note that a homogeneous weight is a weight function which takes the same average value on every non-zero one-sided ideal of the ring and hence every coordinate of a linear code carries the same total, respectively average, weight (see \cite{hom}).

\begin{example}
 The density of free codes of fixed rate in $\mathbb{Z}/4 \mathbb{Z}^n$, for $n\rightarrow \infty$, is given by $$\frac{2}{(-\sqrt{1/2}; 1/2)_\infty + (\sqrt{1/2}; 1/2)_\infty} \sim 0.59546.$$ 
\end{example}

The standard way to generate the codewords of a (left) linear code is via a {\em generator matrix}. 
For a (left) linear code $\mathcal{C}$ of rank $K$ defined over a finite ring, a generator matrix is defined to be any matrix $G$ satisfying $\mathcal{C} = \{ xG \mid x \in \mathcal{R}^K\}$.

Asymptotic properties of codes are often examined by creating a random code ensemble. One way to achieve this is via the generator matrix: a random linear code may be constructed by taking a random matrix and defining the code to be the row span of this matrix.

Hence, we study the probability that a random matrix of size $m \times n$ over $\mathcal{R}$ has $\mathcal{R}$-dimension  $k$. To count the number of $m \times n$ matrices over $\mathcal{R}$ of $\mathcal{R}$-dimension  $k$, we use the formulas derived in \cite{kschischang}.
   
\begin{proposition}[\text{\cite[Theorem 2]{kschischang}}]
The number of $m \times n$ matrices over $\mathcal{R}$  having type $(s^{k_1} \cdots  1^{k_s})$ is given by
$$T_{n,q}(k_1, \ldots, k_s)= q^{mks}\frac{(1/q)_m}{(1/q)_{m-K}} N_{n,q}(k_1, \ldots, k_s).$$
\end{proposition}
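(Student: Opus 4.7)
The plan is to factor a matrix of a given type into two independent choices: its row module, and a generating $m$-tuple inside that module. First I would observe that every $A \in \mathcal{R}^{m \times n}$ of type $\lambda = (s^{k_1}\cdots 1^{k_s})$ is uniquely determined by the pair $(M,\mathbf{r})$, where $M \subseteq \mathcal{R}^n$ is the left $\mathcal{R}$-module generated by the rows of $A$, which must have type $\lambda$, and $\mathbf{r} = (r_1,\ldots,r_m) \in M^m$ is an $m$-tuple of elements of $M$ that generate $M$. Since the type $\lambda$ of the row module is what is prescribed, these two choices are independent and the count factorises as the product of (i) the number of submodules of $\mathcal{R}^n$ of type $\lambda$ and (ii) the number of generating $m$-tuples in a fixed module $M$ of type $\lambda$.

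Step (i) is exactly $N_{n,q}(k_1,\ldots,k_s)$ from the preceding proposition, so the whole task reduces to counting generating $m$-tuples of a fixed $M$ with rank $K = \sum_i k_i$ and length $\ell = \sum_{i=1}^s (s-i+1)k_i$. For this I would invoke Nakayama's lemma for the local ring $\mathcal{R}$: an $m$-tuple $\mathbf{r}\in M^m$ generates $M$ if and only if the reduction $\bar{\mathbf{r}}\in (M/\langle\pi\rangle M)^m$ generates $M/\langle\pi\rangle M$. The quotient $M/\langle\pi\rangle M$ is a $K$-dimensional vector space over the residue field $\mathbb{F}_q$, so the number of generating reductions is the number of $m \times K$ matrices of full column rank over $\mathbb{F}_q$, namely $\prod_{i=0}^{K-1}(q^m - q^i)$. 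Since the reduction map $M^m \twoheadrightarrow (M/\langle\pi\rangle M)^m$ has fibres of constant size $|\langle\pi\rangle M|^m = (q^\ell/q^K)^m = q^{m(\ell-K)}$, the number of generating $m$-tuples of $M$ equals
\[
q^{m(\ell-K)}\prod_{i=0}^{K-1}(q^m - q^i) = q^{m\ell}\,\frac{(1/q)_m}{(1/q)_{m-K}},
\]
where the last equality rewrites $\prod_{i=0}^{K-1}(q^m-q^i) = q^{mK}\prod_{j=m-K+1}^{m}(1-q^{-j})$ in terms of $q$-Pochhammer symbols.

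Multiplying (i) and (ii) gives $T_{n,q}(k_1,\ldots,k_s) = q^{m\ell}\,\frac{(1/q)_m}{(1/q)_{m-K}}\, N_{n,q}(k_1,\ldots,k_s)$, which matches the claimed formula once one writes $\ell = ks$ with $k$ the $\mathcal{R}$-dimension. The only genuine obstacle is justifying the uniform fibre size in the reduction step (which uses that $M$ is finite and the short exact sequence $0\to \langle\pi\rangle M \to M \to M/\langle\pi\rangle M \to 0$ splits set-theoretically); everything else is a direct substitution. Note also that independence of this count from the particular choice of $M$ of type $\lambda$ follows from the fact that any two modules of the same type are isomorphic, which is recorded in the preliminaries.
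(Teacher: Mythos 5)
Your proof is correct. Note that the paper does not prove this proposition at all---it is imported verbatim as Theorem~2 of the cited reference---so there is no internal argument to compare against; your write-up in fact supplies the standard proof of that external result. The decomposition into (row module of type $\lambda$) $\times$ (generating $m$-tuple of a fixed module of that type), the use of Nakayama's lemma over the local ring $\mathcal{R}$ to reduce to spanning $m$-tuples of the $K$-dimensional $\mathbb{F}_q$-space $M/\langle\pi\rangle M$, the spanning-tuple count $\prod_{i=0}^{K-1}(q^m-q^i)=q^{mK}(1/q)_m/(1/q)_{m-K}$, the fibre size $q^{m(\ell-K)}$, and the identification $\ell=ks$ all check out, and the independence from the choice of $M$ follows from isomorphism of modules of equal type as you say. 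One small simplification: no set-theoretic splitting of $0\to\langle\pi\rangle M\to M\to M/\langle\pi\rangle M\to 0$ is needed for the uniform fibre size, since the fibres of $M^m\to(M/\langle\pi\rangle M)^m$ are simply cosets of the subgroup $(\langle\pi\rangle M)^m$ and hence all have cardinality $|\langle\pi\rangle M|^m$.
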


In the following we compute the density of  matrices of $\mathcal{R}$-dimension  $k$.
Let $n<m \in \mathbb{N}$. We call $A \in \mathcal{R}^{m \times n}$ \emph{rectangular unimodular} if there exist $m-n$ rows in $\mathcal{R}^n$, such that when adjoining these rows to $A$, the resulting $m \times m$ matrix is invertible over $\mathcal{R}$.
Thus, a rectangular unimodular matrix is the ring analog of a full rank matrix over a field.
\begin{proposition}\label{prop:densmat}
The density of  $m \times n$ matrices over $\mathcal{R}$ having $\mathcal{R}$-dimension  $k$, is $1$ if $k=m$ and 0 if $k<m.$
\end{proposition}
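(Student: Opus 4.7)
My plan is to start from the explicit count $T_{n,q}(k_1,\ldots,k_s)$ given in the preceding proposition. Dividing by the total number $q^{mns}$ of $m\times n$ matrices over $\mathcal{R}$, the density of matrices whose row span has type $(s^{k_1}\cdots 1^{k_s})$ takes the clean form
\[
D(k_1,\ldots,k_s) = q^{m\ell-mns}\,\frac{(1/q)_m}{(1/q)_{m-K}}\,N_{n,q}(k_1,\ldots,k_s),
\]
where $K=\sum_i k_i$ and $\ell=\sum_i k_i(s-i+1)$, and the density of matrices of $\mathcal{R}$-dimension $k$ is the finite sum of these quantities over types with $\ell = ks$. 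The implicit asymptotic regime is $n\to\infty$ with $m,q,s$ fixed, and throughout I shall tacitly assume $m\leq n$, so that the value $k=m$ is at all achievable.

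First I would treat the case $k=m$, where only the free type $(s^m,0,\ldots,0)$ contributes. Substituting the closed form $N_{n,q}(m,0,\ldots,0)=q^{(n-m)m(s-1)}\gb{n}{m}_q$ and gathering all the powers of $q$ should collapse the exponent to $-m(n-m)$. Applying the identity $\gb{n}{m}_q=q^{(n-m)m}\gb{n}{m}_{1/q}$ then absorbs precisely that factor and leaves
\[
D(m,0,\ldots,0) = (1/q)_m\,\gb{n}{m}_{1/q}.
\]
The asymptotic $\lim_{n\to\infty}\gb{n}{m}_{1/q} = 1/(1/q)_m$ recalled in Section \ref{sec:preliminaries} then forces $D(m,0,\ldots,0)\to 1$.

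For $k<m$ I would avoid any further calculation and instead argue by conservation of probability. For each fixed $n$ the densities $D(k_1,\ldots,k_s)$ partition $1$ as $(k_1,\ldots,k_s)$ runs over the (finitely many) admissible types with $K\leq\min(m,n)$, and each value of the $\mathcal{R}$-dimension corresponds to a finite subfamily of those types. Since the single type $(s^m,0,\ldots,0)$ already absorbs mass tending to $1$ and every $D(k_1,\ldots,k_s)$ is non-negative, the mass attached to any $k<m$ is squeezed to $0$ in the limit.

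The only delicate step I anticipate is the exponent bookkeeping in the free case: it is purely mechanical, but a stray $s-1$ or $m^2s$ would spoil the cancellation, so I would track the contributions from $q^{m\ell}$, from the Gaussian-to-Pochhammer conversion, and from the $q^{mns}$ normalization in parallel. Everything else reduces to the Gaussian-coefficient asymptotic already recalled in the preliminaries and the trivial observation that type-densities sum to one.
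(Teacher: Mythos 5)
Your proof is correct, and your handling of the case $k<m$ — squeezing its mass to $0$ because the single full-dimension type already absorbs probability tending to $1$ — is exactly the complementarity argument the paper uses. Where you genuinely differ is in how the $k=m$ probability is obtained: the paper argues structurally that an $m\times n$ matrix generates a module of $\mathcal{R}$-dimension $m$ precisely when it is rectangular unimodular (equivalently, has full rank modulo $\pi$), and quotes the resulting exact probability $\frac{(1/q)_n}{(1/q)_{n-m}}$, which manifestly does not depend on $s$ and tends to $1$; you instead substitute the free type of rank $m$ into the matrix-count formula $T_{n,q}$ of the preceding proposition and simplify. Your bookkeeping is sound: with $\ell=ms$ and $K=m$ the exponent $m\ell-mns+(n-m)m(s-1)$ collapses to $-m(n-m)$, giving $(1/q)_m\gb{n}{m}_{1/q}$, which is in fact identical to the paper's $\frac{(1/q)_n}{(1/q)_{n-m}}$, so the two routes agree on the exact finite-$n$ probability and not just on the limit. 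Your route is purely mechanical given the counting proposition and makes the cancellation explicit; the paper's route is shorter, explains conceptually why $s$ drops out, and, since $\frac{(1/q)_n}{(1/q)_{n-m}}\to 1$ whenever $n-m\to\infty$, it immediately yields the stronger remark that the density is still $1$ when $m=Rn$ with $0\le R<1$ — a regime your final limit step, which fixes $m$ and invokes $\gb{n}{m}_{1/q}\to 1/(1/q)_m$, does not cover as stated, although your closed form would give it with the same observation.
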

\begin{proof}
We note that it is enough to prove the result for the case $k=m$, because for $k<m$ the probability is bounded from above by the counter probability of the case $k=m$.  
Hence, we start with a random $k \times n$ matrix over $\mathcal{R}$ and ask for the module generated by this matrix to have $\mathcal{R}$-dimension  $k$.
Note that this is the same as asking for the matrix to be rectangular unimodular. The  probability that a $k \times n$ matrix over $\mathcal{R}$ is rectangular unimodular is 
$$ \frac{(1/q)_n}{(1/q)_{n-k}},$$ which does not depend on $s$ and goes to 1 for $n \to \infty$, also when $k=Rn$ for $0\leq R<1$. 

\end{proof}

\begin{remark}
We can give a lower bound on the probability in the case $k<m$, using  the fact that the number of all modules of $\mathcal{R}$-dimension  $k$ is greater than or equal to the number of free modules of $\mathcal{R}$-dimension  $k$, i.e., $$\gb{n}{ks}_q^{(s)} \geq q^{sk(n-k)} \gb{n}{k}_{1/q}.$$
Thus, we get that \begin{align*}
& \frac{\sum\limits_{(k_1, \ldots, k_s) \in L(s,n,ks)}q^{smk} \frac{(1/q)_m}{(1/q)_{m-K}} N_{n,q}(k_1, \ldots, k_s)}{q^{smn}} \\
\geq &    (1/q)_m (1/q)^{sm(n-k)} \gb{n}{ks}_q^{(s)} \\
\geq  & \frac{(1/q)_m (1/q)_n}{(1/q)_{n-k} (1/q)_k} (1/q)^{s(n-k)(m-k)}.
\end{align*}
\end{remark}

\subsection{Random Codes Achieve the Gilbert-Varshamov Bound}

In algebraic coding theory one of the most important parameters of a code is its minimum distance. For this we have to endow our finite chain ring with a metric. 

\begin{definition}\label{weight}
We say that a map 
    $\text{wt}: \mathcal{R} \to \mathbb{R}$ is a {\em weight} if it satisfies the following:
\begin{enumerate}
    \item $\text{wt}$ is positive definite, i.e., $\text{wt}(0) =0$, and $\text{wt}(x) >0 $ for all $x \in \mathcal{R}\setminus\{0\},$
    \item $\text{wt}$ is symmetric, i.e., $\text{wt}(x)=\text{wt}(-x)$ for all $x \in \mathcal{R}$,
    \item $\text{wt}$ satisfies the triangle inequality, i.e, $\text{wt}(x+y) \leq \text{wt}(x) + \text{wt}(y),$ for all $x,y \in \mathcal{R}.$
\end{enumerate}
\end{definition}
One can extend this weight additively to a weight on $\mathcal{R}^n$. 
By abuse of notation, the weight on $\mathcal{R}^n$ will also be denoted by $\text{wt}.$
Then for any $x =(x_1,\ldots,x_n)\in \mathcal{R}^n$ we have 
$\text{wt}(x) := \sum_{i=1}^n \text{wt}(x_i)$.
Such a weight induces a distance function 
\begin{align*}
    d: \mathcal{R} \times \mathcal{R} &\to  \mathbb{R}, \\
    (x,y) & \mapsto \text{wt}(x-y).
\end{align*} Again, we may extend this additively to a distance function $d:\mathcal{R}^n \times \mathcal{R}^n \longrightarrow {\mathbb R}$.

The minimum distance of a code $\mathcal{C} \in \mathcal{R}^n$ is defined to be 
$$d(\mathcal{C}) := \min\{d(x,y) \mid x, y \in \mathcal{C}, x \neq y\}.$$

We will furthermore assume that any distance function considered in this section is translation-invariant, that is, $d(x,y)=d(x+a,y+a)$ for all $a,x,y \in \mathcal{R}^n.$
One can endow $\mathcal{R}$ with the following weights, which all induce translation invariant distance functions on $\mathcal{R}^n$.
\begin{definition}\label{def:exwt}

\begin{itemize}
    \item[i)] The \emph{Hamming weight} of $x \in \mathcal{R}^n$ is given by the number of non-zero entries, i.e.,
$$\text{wt}_H(x) := \mid \{ i \in \{1, \ldots, n \} \mid x_i \neq 0 \} \mid .$$
\item[ii)] The \emph{homogeneous weight} of  $x \in \mathcal{R}$ is defined to be \[\text{wt}_{\hom}(x) :=\left\{\begin{array}{cl}\frac{q}{q-1} & \text{ if }x \in \langle \pi^{s-1} \rangle \backslash \{0\}, \\
               1 &\text{ if } x \notin \langle \pi^{s-1} \rangle, \\
               0 & \text{ otherwise.} 
               \end{array}\right.\] 
\item[iii)] The \emph{Lee weight} of $x \in \zps$ is given by 
$$\text{wt}_L(x) := \min \{x,  \mid p^s -x \mid  \}.$$
\end{itemize}
\end{definition}

Note, in the case that $\mathcal{R}=\F_q$, for the formulation of the homogeneous weight given here, the Hamming weight represents a normalization of the homogeneous weight, while for $\mathcal{R}=\Z/4\Z$, the Lee weight and the homogeneous weight coincide.

One of the most important bounds in coding theory is the Gilbert-Varshamov (G-V) bound, which gives a bound on the maximal possible size a code can have given its length $n$ and its minimum distance. In its asymptotic version, the G-V bound gives a lower bound on the rate $R=k/n$ of a code. 
For codes over finite fields it is well known that random codes achieve the G-V bound with high probability \cite{bargforney,pierce}. The same result for rank-metric codes was also shown using the probabilistic method in \cite{loidreau}. 
While the G-V bound for codes over finite chain rings has been addressed in \cite{ozbsole}, the statement on random codes is missing in the literature. 

In the rest of this section, we prove that a random code over a finite chain ring $\mathcal R$ endowed with a metric achieves, with high probability, the asymptotic Gilbert-Varshamov bound.

 Let $N$ be the maximal weight that an element of $\mathcal{R}^n$ can achieve.  For example in the Hamming metric we have $N=n$, for the homogeneous weight we have $N= \frac{q}{q-1}n$, while in the Lee metric on $\zpsk{n}$ we have $N= n \lfloor \frac{p^s}{2} \rfloor$.

The open ball of radius $w$ centred at $u \in \mathcal{R}^n$ is defined by 
\begin{gather*} 
B(n,x,w) := \{y \in \mathcal{R}^n \mid d(x,y)< w \}.
\end{gather*}
Since $d$ is translation-invariant, we have that 
$|B(n,x,w)|$ is independent of its center and in particular is equal to $|B(n,0,w)|$.
We denote its size by $V(n,w):=|B(n,0,w)|$.
We write $\bar{B}(n,x,w)$ to denoted the closed ball centred at $x$, that is
$\bar{B}(n,x,w) := \{y \in \mathcal{R}^n \mid d(x,y)\leq w \}$, which is again translation-invariant and we write 
$\bar{V}(n,w)$ to denote its cardinality, $|\bar{B}(n,0,w)|$.  
Let us denote by $AL(n,d)$ the maximal size of a code in $\mathcal{R}^n$ having minimum distance $d$.

The Gilbert-Varshamov bound  \cite{gilbert, varshamov} now states that
\begin{equation*}
  AL(n,d) \geq \frac{q^{sn}}{V(n,d)}.  
\end{equation*}
 
We denote by 
$$g(\delta) := \lim\limits_{n \to \infty} \frac{1}{n} \log_{q^s}\left( \bar{V}(n, \delta N) \right). $$ 
In \cite{loeliger} it is shown that this limit exists for an arbitrary weight function defined additively on its coordinates and that it is equal to the entropy of the corresponding probability distribution on $\mathcal R$.
For example, in the Hamming metric it is well known \cite{loeliger} that 
$$g(\delta) = h_{q^s}(\delta),$$ where $h_{q^s}$ is the $q^s$-ary entropy function.
In the case of homogeneous weight, the limit was explicitly computed in \cite{Greferath}, and for the Lee metric on $\zpsk{n}$, it  was computed in \cite{saddle, leenp}.

Let us denote by $$\overline{R}(\delta) := \limsup\limits_{n \to \infty} \frac{1}{n} \log_{q^s} AL(n, \delta N). $$

The asymptotic Gilbert-Varshamov bound now states that 
$$ \overline{R}(\delta) \geq 1-g(\delta). $$

Note that $g(\delta) \in [0,1]$ is a continuous and increasing function in $\delta$. Let us denote by $D$ the minimum value in $[0,1]$ such that $g(D) = 1.$  For example in the Hamming metric we have  $D=1-1/q^s$, in the homogeneous metric we have $D=1$, and in the Lee metric over $\zps$ we have $D\approx 1/2.$

Recall that in complexity theory we write $f(n) = \Omega(g(n))$, if $\limsup\limits_{n \to \infty} \left | \frac{f(n)}{g(n)} \right | >0$.
For example, $f(n) = \Omega(n)$ means that $f(n)$ grows at least polynomially in $n$.

\begin{theorem}
Let $\mathcal R$ be a finite chain ring having maximal ideal $\langle \pi \rangle$, residue field of size $q$ and nilpotency index $s$. Then for every $\delta \in [0, D),$ $$0 < \varepsilon < 1- g(\delta N/n),$$ and sufficiently large $n$, the following holds for $$k= \lceil(1- g(\delta N/n) - \varepsilon)n \rceil.$$ 
If $G \in  \mathcal R^{k \times n}$ is chosen uniformly at random, then the code generated by $G$ has rate $k/n$ and relative minimum distance $\delta$ with probability at least $$\left(\frac{(1/q)_n}{(1/q)_{n-k}} \right) \left(1- q^{s\left(1-\varepsilon n\right)}\right) \geq 1-  e^{-\Omega(n)}.$$ 
\end{theorem}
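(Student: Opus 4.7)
The plan is to write the desired probability as $\Pr[A\cap B]=\Pr[A]\cdot\Pr[B\mid A]$, where $A$ is the event that $G$ is rectangular unimodular (equivalently, that $\mathcal{C}_G$ has rate exactly $k/n$) and $B$ is the event that the minimum distance of $\mathcal{C}_G$ is at least $\delta N$. The first factor is precisely $\Pr[A]=(1/q)_n/(1/q)_{n-k}$, as computed in the proof of Proposition \ref{prop:densmat}; the task reduces to showing $\Pr[\bar{B}\mid A]\leq q^{s(1-\varepsilon n)}$.

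For this I will union-bound over $x\in\mathcal{R}^k\setminus\{0\}$: on $A$, the map $x\mapsto xG$ is a bijection $\mathcal{R}^k\to\mathcal{C}_G$. The key distributional input is the following. For any $x=\pi^j x'$ with $x'$ primitive (i.e., possessing at least one unit coordinate), the image $xG$ under a uniformly random $G\in\mathcal{R}^{k\times n}$ is uniformly distributed over the ideal submodule $\pi^j\mathcal{R}^n$ of cardinality $q^{(s-j)n}$; this follows by treating the row of $G$ indexed by a unit entry of $x'$ as a free parameter. Hence $\Pr[xG\in B(0,\delta N)]\leq V(n,\delta N)/q^{(s-j)n}$, and grouping $x$ by its minimum valuation $j$ (noting there are at most $q^{k(s-j)}$ such $x$'s) yields
\[
\Pr[\bar{B}]\leq V(n,\delta N)\sum_{j=0}^{s-1} q^{-(s-j)(n-k)}.
\]
Substituting $k=\lceil(1-g(\delta N/n)-\varepsilon)n\rceil$ and the entropy-type estimate $V(n,\delta N)\leq q^{sn\,g(\delta N/n)(1+o(1))}$ coming from the definition of $g$, the dominant $j=0$ term evaluates to $q^{-s\varepsilon n}$; the remaining terms form a convergent geometric-type series in $q^{-(n-k)}$ (with $n-k=\Omega(n)$) whose total contributes only a constant factor, absorbed into the prefactor $q^s$ inside $q^{s(1-\varepsilon n)}=q^s\cdot q^{-s\varepsilon n}$. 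Assembling gives $\Pr[A\cap B]\geq \Pr[A]\bigl(1-q^{s(1-\varepsilon n)}\bigr)$, and the final estimate $\geq 1-e^{-\Omega(n)}$ follows from $(1/q)_n/(1/q)_{n-k}=\prod_{i=n-k+1}^n(1-q^{-i})\geq 1-q^{-(n-k)+1}$ together with $n-k,\,s\varepsilon n=\Omega(n)$.

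The main obstacle is making the non-principal ($j\geq 1$) contributions in the union bound fit inside the prefactor $q^s$: the crude uniform estimate $V_j:=|B(0,\delta N)\cap\pi^j\mathcal{R}^n|\leq V(n,\delta N)$ can be lossy, and a metric-specific entropy bound of the form $V_j\leq q^{n(s-j)g_j(\delta)}$ may be needed to verify that each non-principal term is dominated by the $j=0$ term; once this absorption is checked, the assembly of the two factors into the stated product bound is routine.
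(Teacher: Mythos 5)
Your skeleton matches the paper's proof up to the point that matters: the factor $\frac{(1/q)_n}{(1/q)_{n-k}}$ is taken from Proposition \ref{prop:densmat}, and for unimodular $x$ both you and the paper use that $xG$ is uniform on $\mathcal{R}^n$, giving a per-codeword bound of order $\bar{V}(n,\delta N/n)/q^{sn}$. The genuine gap is your treatment of the torsion codewords $x=\pi^j y$, $j\geq 1$, $y$ unimodular, and it is not a deferrable detail --- it is the crux for $s\geq 2$. Your distributional claim (that $xG$ is uniform on $\pi^j\mathcal{R}^n$) is correct, but the assembly you sketch fails: in your bound $V(n,\delta N)\sum_{j=0}^{s-1}q^{-(s-j)(n-k)}$ the dominant term is $j=s-1$, not $j=0$, since the exponents $-(s-j)(n-k)$ increase with $j$. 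With $\bar{V}(n,\delta N)\approx q^{sng(\delta)}$ and $n-k\approx (g(\delta)+\varepsilon)n$, the $j=s-1$ term is of order $q^{n((s-1)g(\delta)-\varepsilon)}$, which diverges for every $s\geq 2$, $\delta>0$ and small $\varepsilon$; so the claim that the $j\geq 1$ terms contribute "only a constant factor absorbed into $q^{s}$" is false, and your union bound does not yield the stated probability. Nor is the patch you propose routine: taking the exact metric-specific count, e.g.\ in the Hamming metric $V_j\approx q^{(s-j)n\,h_{q^{s-j}}(\delta)}$, the $j$-th term becomes $q^{(s-j)n\left(h_{q^{s-j}}(\delta)-h_{q^s}(\delta)-\varepsilon\right)}$, which still diverges whenever $\varepsilon< h_{q^{s-j}}(\delta)-h_{q^s}(\delta)$; so "each non-principal term is dominated by the $j=0$ term" is exactly the hard (and, in this raw form, doubtful) assertion, not a verification to be checked later.

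The paper closes this step by a different device for which your write-up has no substitute: writing $x=\pi^i y$ with $y$ unimodular and setting $\eta=\max\{\wt(z)\}/\min\{\wt(z): z\neq 0\}$, it compares $\wt(xG)$ with $\wt(yG)$ and bounds the per-codeword probability by $\bar{V}\bigl(n,(\delta N/n)/\eta\bigr)/q^{sn}\leq \bar{V}(n,\delta N/n)/q^{sn}$, i.e.\ it reduces every $x$ to the unimodular case, keeping the full $q^{sn}$ in the denominator at the price of shrinking the radius by $\eta$, so that all $q^{sk}$ terms of the union bound satisfy the same estimate $q^{s(g(\delta N/n)n-n)}$ and the bound closes with $q^{sk}q^{s(g(\delta N/n)n-n)}\leq q^{s(1-\varepsilon n)}$. (Be aware that this transfer is itself delicate: from $\wt(xG)\leq\eta\,\wt(yG)$ one gets the inclusion $\{\wt(yG)\leq w/\eta\}\subseteq\{\wt(xG)\leq w\}$, which is the reverse of the containment invoked, so some weight comparison between $xG$ and the uniformly distributed $yG$ has to be argued with care.) In any case, some such mechanism for controlling low-weight torsion codewords is the missing idea in your proposal; as written, your argument proves the theorem only in the field case $s=1$.
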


\begin{proof}
The probability that a matrix $G \in \mathcal R^{k \times n}$ generates a code of $\mathcal{R}$-dimension $k$  is given by $$\frac{(1/q)_n}{(1/q)_{n-k}}$$ as seen in the proof of Proposition \ref{prop:densmat}.

For any non-zero $x \in \mathcal R^{k}$, we will show that the probability that $\wt(xG) \leq \delta N/n$ is at most 
\[\frac{  \bar{V}(n, \delta N/n) }{q^{sn}} \leq  q^{s\left(g(\delta N/n)n-n\right)}.\]
This clearly follows if we assume $x$ is unimodular, since then 
the morphism $f_x: \mathcal{R}^k \longrightarrow \mathcal{R}, z \mapsto x \cdot z$ is a surjection and so $xG$ is a uniformly distributed in $\mathcal{R}^n$ for $G$ chosen uniformly at random in $\mathcal{R}^{k \times n}$.

More generally, let $x \in \pi^i \mathcal{R}^k \setminus \pi^{i+1}\mathcal{R}^k$ for some $i \in \{1,\ldots,s-1\}$, i.e., $x = \pi^i y$ for some unimodular $y \in \mathcal R^{k}$. 
For $$\eta = \frac{\max\{\wt(z) \mid z \in \mathcal{R}^n\}}{\min\{\wt(z) \mid z \in \mathcal{R}^n \setminus \{0\}\}} \geq 1,$$ we see that $\wt(xG) \leq \eta \wt(yG)$. Moreover, $yG$ is uniformly random in $\mathcal R^n$.\\
Finally, the probability that $\wt(xG) \leq \delta N/n$ is less than or equal to the probability that $\wt(yG) \leq (\delta N/n)/\eta$, which is upper bounded by
\[\frac{  \bar{V}(n, (\delta N/n)/\eta) }{q^{sn}} \leq \frac{  \bar{V}(n, \delta N/n) }{q^{sn}} \leq q^{s\left(g(\delta N/n)n-n\right)}.\]

Using a union bound over all non-zero $x$ implies that the code has minimum distance at most $\delta n$ is bounded from above by 
$$q^{sk} q^{s\left(g(\delta N/n)n-n\right)}  \leq q^{s(1-\varepsilon n)}.$$
Hence, we have that the code $\mathcal{C}$ has relative minimum distance $\delta$ with probability at least $   \left(1- q^{s\left(1-\varepsilon n\right)}\right)\geq 1- e^{-\Omega(n)}$. 

\end{proof}

 \section{Conclusion}\label{sec:concl}
We conclude this paper with two interesting open problems we encountered. The first problem is to find another generalization of the Roger-Ramanujan identities. Since the density of free modules is bounded by the Andrews-Gordon identity (see Theorem \ref{thm:dens}), which themselves are products of $q$-Pochhammer symbols, this leads us to the question of finding a similar identity for the density of free submodules of $\mathcal{R}^n$ of given length. More precisely, we state the following problem.
\begin{problem}\label{pr1}
Determine whether \begin{equation*}
    \left(\sum\limits_{\substack{K_2, \ldots, K_s \geq 0\\ s \mid K_2 + \cdots + K_s }} \frac{(1/q)^{K_2^2 + \cdots+ K_s^2  -(K_2+ \cdots +K_s)^2/s  }}{(1/q)_{k_2} \cdots (1/q)_{k_s}} \right)^{-1},
\end{equation*}
where $K_i=\sum_{j=2}^i k_j$, has an expression as product of $q$-Pochhammer symbols. 
\end{problem}

The second problem we have encountered is that there does not seem to exist a natural way to order the types in $L(s,n,\ell)$ such that the canonical order on the number of modules $N_{n,q}(k_1, \ldots, k_s)$ is induced by this order on $L(s,n,\ell).$  Such a characterization would be convenient in order to determine which type is more likely. For example, for fixed finite parameters, does the number of free submodules dominate the number of modules of any other type?

\begin{example}
\begin{enumerate}
\item Note, that the lexicographic order on $L(s,n,\ell)$ does not satisfy this property, for example let $n=10, \ell=15, s=3$ and $q=2$, then $(4,0,3) >_{lex} (3,3,0)$ but there are more codes of the type $(3^3 2^3 1^0)$ than of the type $(3^4 2^0 1^3)$.
\item Also ordering  them according to their rank $K$, i.e., $(k_1, \ldots, k_s) >_K (\bar{k}_1, \ldots, \bar{k}_s)$ if $\sum_{i=1}^s \bar{k}_i > \sum_{i=1}^s k_i$, does not work, since for $n=10, \ell=15, s=3$ and $q=2$,  the type $(3^2 2^3 1^3)$ has a larger rank than $(3^1 2^6 1^0)$ but there are more codes of type $(3^1 2^6 1^0)$ than of the type $(3^2 2^3 1^3).$
\item Finally, let us define the following order (which is related to the shape $\mu$): $(k_1, \ldots, k_s) >_s (\bar{k}_1, \ldots, \bar{k}_s)$ whenever $$\sum_{i=1}^s \left( \sum_{j=1}^i \bar{k}_j \right)^2 > \sum_{i=1}^s \left( \sum_{j=1}^i k_j \right)^2.$$ Also this order does not solve the problem: for $n=10,\ell=30, s=6$ and $q=2$ we have that $(2,1,1,1,2,2) >_s (0,5,1,0,0,1)$ but $N_{n,q}(0,5,1,0,0,1) > N_{n,q}(2,1,1,1,2,2).$
\end{enumerate}
\end{example}

\begin{problem}\label{proborder}
Establish a simplified condition on $(k_1, \ldots, k_s),  (\bar{k}_1, \ldots, \bar{k}_s) \in L(s,n,\ell)$ such that we have $$N_{n,q}(k_1, \ldots, k_s) \leq N_{n,q}(\bar{k}_1, \ldots, \bar{k}_s).$$
\end{problem}

\begin{remark}
   We remark that since this paper was submitted, Problem \ref{pr1} was addressed in \cite{do}, wherein the authors showed that the quantity in Problem \ref{pr1} can be expressed as a sum of $s^2$ quotients of theta functions, rather than as a product of $q$-Pochhammer symbols.
\end{remark}

\section*{Acknowledgments}\label{sec:ack}
The authors want to thank Gianira N. Alfarano and the anonymous referee for fruitful discussions on the shape of a module. The third author is supported by the Estonian Research Council grant PRG49. The fourth author  is  supported by the Swiss National Science Foundation grant number 195290.

\bibliographystyle{plain}

\bibliography{References}

\end{document}